\makeatletter
\def\input@path{{styles/}{../styles/}}
\makeatother

\makeatletter
\def\input@path{{styles/}{../styles/}}
\makeatother

\ifx\SODA\undefined%
    \ifx\DCG\undefined%
        \documentclass[12pt]{article}%
        \providecommand{\SODAVer}[1]{}%
        \providecommand{\RegVer}[1]{#1}%
        \providecommand{\DCGVer}[1]{}%
        \def\UseBibLatex{1}
    \else
        \documentclass[sn-mathphys]{sn-jnl}%
        \providecommand{\SODAVer}[1]{}%
        \providecommand{\RegVer}[1]{}%
        \providecommand{\DCGVer}[1]{#1}%
    \fi
\else%
    \documentclass[twoside,leqno]{article}
    \providecommand{\SODAVer}[1]{#1}%
    \providecommand{\RegVer}[1]{#1}%
    \providecommand{\DCGVer}[1]{}%
    \usepackage{ltexpprt}%
    \newcommand\relatedversion{}
    \renewcommand\relatedversion{\thanks{The full version of the paper
          can be accessed at
          \protect\url{https://arxiv.org/abs/2112.14829}}}%
\fi
\RegVer{%
   \usepackage[cm]{fullpage}%
   \usepackage{caption}
}%

\DCGVer{%
   \usepackage{textcomp}%
}

\ifx\sarielComp\undefined%
\newcommand{\SarielComp}[1]{}
\newcommand{\NotSarielComp}[1]{#1}%
\else
\newcommand{\SarielComp}[1]{#1}%
\newcommand{\NotSarielComp}[1]{}%
\fi

\usepackage{amsmath}%
\usepackage{amssymb}%
\usepackage{xcolor}%
\usepackage{scalerel}%

\SarielComp{\usepackage{sariel_colors}}%

\RegVer{%
   \usepackage[amsmath,thmmarks]{ntheorem}%
   \theoremseparator{.}%

   \usepackage{titlesec}%
   \titlelabel{\thetitle. }%
}%
\usepackage{xcolor}%
\usepackage{mleftright}%
\usepackage{xspace}%
\usepackage{hyperref}%
\usepackage{mathcalb}%

\usepackage{stmaryrd}%
\providecommand{\IntRange}[1]{\mleft\llbracket #1 \mright\rrbracket}
\newcommand{\IRX}[1]{\IntRange{#1}}%
\newcommand{\IRY}[2]{\left\llbracket #1:#2 \right\rrbracket}

\hypersetup{%
      breaklinks,%
      colorlinks=true,%
      urlcolor=[rgb]{0.25,0.0,0.0},%
      linkcolor=[rgb]{0.5,0.0,0.0},%
      citecolor=[rgb]{0,0.2,0.445},%
      filecolor=[rgb]{0,0,0.4},
      anchorcolor=[rgb]={0.0,0.1,0.2}%
   }

\usepackage[ocgcolorlinks]{ocgx2}

\DCGVer{%
   \theoremstyle{thmstyleone}%
   \newtheorem{theorem}{Theorem}%
   \newtheorem{proposition}[theorem]{Proposition}%
   \newtheorem{lemma}[theorem]{Lemma}%
   \newtheorem{corollary}[theorem]{Corollary}%

   \theoremstyle{thmstyletwo}%
   \newtheorem{example}{Example}%
   \newtheorem{remark}{Remark}%
   \newtheorem{problem}{Problem}%
   \newtheorem{remarks}{Remarks}%
   \newtheorem{fact}{Fact}%

   \theoremstyle{thmstylethree}%
   \newtheorem{definition}{Definition}%

   \raggedbottom
}

\RegVer{%
\theoremseparator{.}%

\theoremstyle{plain}%
\newtheorem{theorem}{Theorem}[section]

\newtheorem{lemma}[theorem]{Lemma}

\newtheorem{corollary}[theorem]{Corollary}
\newtheorem{fact}[theorem]{Fact}

\theoremstyle{plain}%
\theoremheaderfont{\sf} \theorembodyfont{\upshape}%
\newtheorem*{remark:unnumbered}[theorem]{Remark}%
\newtheorem*{remarks}[theorem]{Remarks}%
\newtheorem{remark}[theorem]{Remark}%
\newtheorem{definition}[theorem]{Definition}

\newtheorem{problem}[theorem]{Problem}

\newcommand{\myqedsymbol}{\rule{2mm}{2mm}}

\theoremheaderfont{\em}%
\theorembodyfont{\upshape}%
\theoremstyle{nonumberplain}%
\theoremseparator{}%
\theoremsymbol{\myqedsymbol}%
\newtheorem{proof}{Proof:}%

}

\providecommand{\BibLatexMode}[1]{}
\providecommand{\BibTexMode}[1]{#1}

\ifx\UseBibLatex\undefined%
  \renewcommand{\BibLatexMode}[1]{}
  \renewcommand{\BibTexMode}[1]{#1}
\else
  \renewcommand{\BibLatexMode}[1]{#1}
  \renewcommand{\BibTexMode}[1]{}
\fi

\BibLatexMode{%
   \usepackage[bibencoding=utf8,style=alphabetic,backend=biber]{biblatex}%
   \usepackage{sariel_biblatex}%
}

\definecolor{blue25emph}{rgb}{0, 0, 11}
\providecommand{\emphic}[2]{%
   \textcolor{blue25emph}{%
      \textbf{\emph{#1}}}%
   \index{#2}}

\providecommand{\emphi}[1]{\emphic{#1}{#1}}

\definecolor{almostblack}{rgb}{0, 0, 0.3}
\providecommand{\emphw}[1]{{\textcolor{almostblack}{\emph{#1}}}}%

\newcommand{\atgen}{\symbol{'100}}
\newcommand{\SarielThanks}[1]{\thanks{Department of Computer Science;
      University of Illinois; 201 N. Goodwin Avenue; Urbana, IL,
      61801, USA; {\tt sariel\atgen{}illinois.edu}; {\tt
         \url{http://sarielhp.org/}.} #1}}

\newcommand{\HLink}[2]{\hyperref[#2]{#1~\ref*{#2}}}
\newcommand{\HLinks}[2]{\hyperref[#2]{#1\ref*{#2}}}
\newcommand{\HLinkSuffix}[3]{\hyperref[#2]{#1\ref*{#2}{#3}}}

\newcommand{\thmlab}[1]{{\label{theo:#1}}}
\newcommand{\thmref}[1]{\HLink{Theorem}{theo:#1}}
\newcommand{\thmrefs}[1]{\HLinks{T}{theo:#1}}

\newcommand{\corlab}[1]{\label{cor:#1}}
\newcommand{\corref}[1]{\HLink{Corollary}{cor:#1}}%

\providecommand{\deflab}[1]{\label{def:#1}}
\newcommand{\defref}[1]{\HLink{Definition}{def:#1}}
\newcommand{\defrefY}[2]{\hyperref[def:#2]{#1}}

\newcommand{\Dual}[1]{#1^\star}

\newcommand{\factlab}[1]{\label{fact:#1}}%
\newcommand{\factref}[1]{\HLink{Fact}{fact:#1}}%

\providecommand{\TPDF}[2]{\texorpdfstring{#1}{#2}}

\newcommand{\levelX}[1]{\Mh{\mathrm{level}}\pth{#1}}%
\newcommand{\tbllab}[1]{\label{table:#1}}
\newcommand{\tblref}[1]{\HLink{Table}{table:#1}}

\newcommand{\seclab}[1]{\label{sec:#1}}
\newcommand{\secref}[1]{\HLink{Section}{sec:#1}}

\newcommand{\apndlab}[1]{\label{apnd:#1}}
\newcommand{\apndref}[1]{\HLink{Appendix}{apnd:#1}}

\newcommand{\lemlab}[1]{\label{lemma:#1}}
\newcommand{\lemref}[1]{\HLink{Lemma}{lemma:#1}}%

\newcommand{\remlab}[1]{\label{rem:#1}}
\newcommand{\remref}[1]{\HLink{Remark}{rem:#1}}%

\providecommand{\eqlab}[1]{}%
\renewcommand{\eqlab}[1]{\label{equation:#1}}
\newcommand{\Eqref}[1]{\HLinkSuffix{Eq.~(}{equation:#1}{)}}

\newcommand{\SaveContent}[2]{%
   \expandafter\newcommand{#1}{#2}%
}

\newcommand{\RestatementOf}[2]{
   \noindent%
   \textbf{Restatement of #1.}
   {\em #2{}}%
}

\providecommand{\remove}[1]{}%
\newcommand{\Set}[2]{\left\{ #1 \;\middle\vert\; #2 \right\}}
\newcommand{\pth}[2][\!]{\mleft({#2}\mright)}%

\newcommand{\ceil}[1]{\left\lceil {#1} \right\rceil}
\newcommand{\floor}[1]{\left\lfloor {#1} \right\rfloor}

\newcommand{\cardin}[1]{\lvert {#1} \rvert}%

\renewcommand{\th}{th\xspace}

\renewcommand{\Re}{\mathbb{R}}%

\usepackage[inline]{enumitem}

\newlist{compactenumA}{enumerate}{5}%
\setlist[compactenumA]{topsep=0pt,itemsep=-1ex,partopsep=1ex,parsep=1ex,%
   label=(\Alph*)}%

\newlist{compactenuma}{enumerate}{5}%
\setlist[compactenuma]{topsep=0pt,itemsep=-1ex,partopsep=1ex,parsep=1ex,%
   label=(\alph*)}%

\newlist{compactenumI}{enumerate}{5}%
\setlist[compactenumI]{topsep=0pt,itemsep=-1ex,partopsep=1ex,parsep=1ex,%
   label=(\Roman*)}%

\newlist{compactenumi}{enumerate}{5}%
\setlist[compactenumi]{topsep=0pt,itemsep=-1ex,partopsep=1ex,parsep=1ex,%
   label=(\roman*)}%

\newlist{compactitem}{itemize}{5}%
\setlist[compactitem]{topsep=0pt,itemsep=-1ex,partopsep=1ex,parsep=1ex,%
   label=\ensuremath{\bullet}}%

\providecommand{\Mh}[1]{#1}%

\renewcommand{\P}{\PS}%
\newcommand{\PS}{\Mh{P}}%
\newcommand{\BS}{\Mh{\mathcal{O}}}%
\newcommand{\G}{\Mh{G}}%
\providecommand{\p}{\Mh{p}}%
\renewcommand{\p}{\Mh{p}}%

\newcommand{\etal}{\textit{et~al.}\xspace}

\newcommand{\D}{\Mh{\mathcal{D}}}%

\newcommand{\rect}{\Mh{\mathcalb{b}}}%
\newcommand{\obj}{\Mh{\mathcalb{o}}}%
\newcommand{\I}{\Mh{\mathcalb{l}}}%
\newcommand{\range}{\Mh{\nu}}%

\newcommand{\CHX}[1]{\mathsf{ch}\pth{#1}}%
\newcommand{\PPS}{\Mh{\mathcal{P}}}%
\newcommand{\canonX}[1]{\Mh{\mathsf{dy}\pth{#1}}}

\newcommand{\Family}{\Mh{\mathcal{F}}}
\newcommand{\FamilyB}{\Mh{\mathcal{G}}}%

\newcommand{\BY}[2]{\mathcal{L}_{#1}(#2)}
\newcommand{\permut}[1]{\left\langle {#1} \right\rangle}
\providecommand{\Matousek}{Matou{\v s}ek\xspace}
\newcommand{\Arr}{\mathop{\mathrm{\mathcal{A}}}}
\newcommand{\ArrX}[1]{\Arr\pth{#1}}%

\newcommand{\eps}{\varepsilon}

\newcommand{\VC}{\Term{VC}\xspace}%
\newcommand{\Term}[1]{\textsf{#1}}

\newcommand{\FZero}{\Mh{\mathcal{U}}}%

\newcommand{\Cutting}{\Mh{\Xi}}

\newcommand{\nF}{\Mh{m}}

\newcommand{\normX}[1]{\left\|#1\right\|}

\SODAVer{%
   \newtheorem{definition}{Definition}[section]%
   \newtheorem{remarks}{Remarks}[section]%
}

\BibLatexMode{%
   \bibliography{klan}%
}

\begin{document}
\DCGVer{%
   \title[On the Number of Incidences When Avoiding an Induced
   Biclique]{%
      On the Number of Incidences When Avoiding an Induced Biclique in
      Geometric Settings\DCGVer{$\phantom{}^{\text{\small{1}}}$}%
      \SODAVer{\relatedversion}%
      \RegVer{\footnote{A preliminary version of this paper appeared
            in SODA 2023 \cite{ch-oniwa-23}.}}%
   }%
} \RegVer{%
   \title{%
      On the Number of Incidences When Avoiding an Induced Biclique in
      Geometric Settings\DCGVer{$\phantom{}^{\text{\small{1}}}$}%
      \SODAVer{\relatedversion}%
      \RegVer{\footnote{A preliminary version of this paper appeared
            in SODA 2023 \cite{ch-oniwa-23}.}}%
   }%
}

\DCGVer{\footnotetext[1]{A preliminary version of this paper appeared
      in SODA 2023 \cite{ch-oniwa-23}.}}%

\DCGVer{}%

\DCGVer{%
   \author[1]{\fnm{Timothy M.} \sur{Chan}}\email{tmc@illinois.edu}

   \author[1]{\fnm{Sariel} \sur{Har-Peled}}\email{sariel@illinois.edu}

   \affil[1]{\orgdiv{Department of Computer Science}, %
      \orgname{University of Illinois, Urbana-Champaign}, %
      \orgaddress{\street{201 N. Goodwin Avenue}, \city{Urbana},
         \postcode{61801}, \state{IL}, \country{USA}}} }

\RegVer{%
   \author{Timothy M. Chan%
      \thanks{Department of Computer Science; University of Illinois;
         201 N. Goodwin Avenue; Urbana, IL, 61801, USA; {\tt
            tmc\atgen{}illinois.edu}.  Work on this paper was
         partially supported by an NSF AF award CCF-2224271.}%
      \and%
      Sariel Har-Peled%
      \SarielThanks{Work on this paper was partially supported by an
         NSF AF award CCF-1907400.}%
   }%
}

\date{\today}

\RegVer{\maketitle}%

\DCGVer{%
\abstract{%
   Given a set of points $\P$ and a set of regions $\BS$, an
   \emph{incidence} is a pair $(\p,\obj ) \in \P \times \BS$ such that
   $\p \in \obj$.  We obtain a number of new results on a classical
   question in combinatorial geometry: What is the number of
   incidences (under certain restrictive conditions)?

   We prove a bound of $O\bigl( k n(\log n/\log\log n)^{d-1} \bigr)$
   on the number of incidences between $n$ points and $n$
   axis-parallel boxes in $\Re^d$, if no $k$ boxes contain $k$ common
   points, that is, if the incidence graph between the points and the
   boxes does not contain $K_{k,k}$ as a subgraph.  This new bound
   improves over previous work, by Basit, Chernikov, Starchenko, Tao,
   and Tran (2021), by more than a factor of $\log^d n$ for $d >2$.
   Furthermore, it matches a lower bound implied by the work of
   Chazelle (1990), for $k=2$, thus settling the question for points
   and boxes.

   We also study several other variants of the problem. For
   halfspaces, using shallow cuttings, we get a linear bound in two
   and three dimensions.  We also present linear (or near linear)
   bounds for shapes with low union complexity, such as pseudodisks
   and fat triangles. %
}
}
\RegVer{%
   \begin{abstract}
       Given a set of points $\P$ and a set of regions $\BS$, an
       \emph{incidence} is a pair $(\p,\obj ) \in \P \times \BS$ such
       that $\p \in \obj$.  We obtain a number of new results on a
       classical question in combinatorial geometry: What is the
       number of incidences (under certain restrictive conditions)?

       We prove a bound of
       $O\bigl( k n(\log n/\log\log n)^{d-1} \bigr)$ on the number of
       incidences between $n$ points and $n$ axis-parallel boxes in
       $\Re^d$, if no $k$ boxes contain $k$ common points, that is, if
       the incidence graph between the points and the boxes does not
       contain $K_{k,k}$ as a subgraph.  This new bound improves over
       previous work, by Basit, Chernikov, Starchenko, Tao, and Tran
       (2021), by more than a factor of $\log^d n$ for $d >2$.
       Furthermore, it matches a lower bound implied by the work of
       Chazelle (1990), for $k=2$, thus settling the question for
       points and boxes.

       We also study several other variants of the problem. For
       halfspaces, using shallow cuttings, we get a linear bound in
       two and three dimensions.  We also present linear (or near
       linear) bounds for shapes with low union complexity, such as
       pseudodisks and fat triangles.
   \end{abstract}
}
\DCGVer{%
   \keywords{Incidences, Range searching}%
}%

\SODAVer{%
  \fancyfoot[R]{\scriptsize{Copyright \textcopyright\ 2023 by SIAM\\
  Unauthorized reproduction of this article is prohibited}}
}

\maketitle

\section{Introduction}

\DCGVer{\subparagraph{Problem statement}}%
\RegVer{\paragraph{Problem statement}}%
Let $\PS$ be a set of $n$ points in $\Re^d$, $\BS$ be a set of $m$
objects in $\Re^d$, and let $k$ be a parameter. Assume that there are
no $k$ points of $\PS$ that are all contained in $k$ objects of
$\BS$. Formally, consider the incidence graph
\begin{equation*}
    \G%
    =%
    \G(\PS, \BS)%
    =%
    \bigl( \PS \cup \BS,\, \Set{ \p \obj}{ \p \in \PS, \obj \in \BS, \text{
          and } \p \in \obj } \big),
\end{equation*}
which we assume does not contain the graph $K_{k,k}$. The question is
how many edges $\G$ has, in the worst case. As $\p \in \obj$ is an
\emphi{incidence}, the question can be formulated as bounding the
number of incidences between the points of $\PS$ and the objects of
$\BS$. We denote the number of edges of $\G(\PS, \BS)$ by
$I(\PS,\BS)$.

\DCGVer{\subparagraph{Background}}%
\RegVer{\paragraph{Background.}}%
\SODAVer{\subsubsection*{Background.}} %
The above problem is interesting because the number of incidences of
$n$ points and $n$ lines is bounded by $\Theta(n^{4/3})$, in the worst
case, and this graph avoids $K_{2,2}$, as any two lines containing the
same two distinct points are identical. The bound on the incidences of
lines and points has a long history; see \cite{s-cnhep-97} and
references therein.  Incidence problems are among the most central
classes of problems in combinatorial geometry and are closely related
to algorithmic problems in computational geometry such as range
searching~\cite{AgarwalE99} (and the so-called ``Hopcroft's
problem''~\cite{AgarwalE99,ChanZ22}).

Fox \etal \cite{fpssz-savzp-17} studied the above problem where the
objects are defined by semi-algebraic sets. For halfspaces, their
bound is $O\bigl((mn)^{d/(d+1)+\eps} + n + m \bigr)$ for any constant
$k$ (however they are studying a more general problem). They also
showed that a bound of $O(m n^{1-1/d} + n)$ holds if the set system
has \VC dimension $d$ (with $n$ elements and $m$ sets). Somewhat
surprisingly, Janzer and Pohoata \cite{jp-zpgbv-21} improved this
bound to $o(n^{2-1/d})$ for $m=n$ and $k\ge d>2$.  See also follow-up
papers by Do~\cite{Do19} and Frankl and Kupavskii~\cite{FranklK},
which gave improvements on the hidden dependence on $k$ in Fox \etal's
bounds.

There was also recent interest in the simpler version of the problem
where the ranges are axis-aligned boxes.  Basit \etal
\cite{bcstt-zpsh-21} showed an upper bound of $O(n\log^4 n)$ for $m=n$
in two dimensions, and $O(n \log^{2d} n)$ for $d>2$, assuming constant
$k$.  They proved a lower bound of $\Omega(n\log n/\log\log n)$ in two
dimensions for $m=n$ and $k=2$.  They also studied the case where the
objects are polytopes formed by the intersection of $s$ translated
halfspaces, where they show a bound of $O(n \log^s n)$.
Independently, Tomon and Zakharov \cite{tz-ttrib-21} showed a bound of
$O(k n \log^{2d+3}n)$ for the case $m=n$ (of boxes) in $\Re^d$. They
also showed a better $O(n\log n)$ bound for the special case $d=2$ if
the intersection graph avoids $K_{2,2}$.

\RegVer{\paragraph{Our results.}} %
\DCGVer{\paragraph{Our results.}} %
\SODAVer{\subsubsection*{Our results.}} %
We obtain a plethora of new results as summarized in \tblref{results}.
To simplify discussion below, we focus on the main setting when $m=n$
and $k$ is constant:

\begin{itemize}
    \smallskip%
    \item For the axis-aligned rectangles/boxes case, we prove the
    bound $O(n(\log n/\log\log n)^{d-1})$.  Not only does this improve
    the previous bounds by Basit \etal~\cite{bcstt-zpsh-21} and Tomon
    and Zakharov~\cite{tz-ttrib-21} by more than a $\log^d n$ factor,
    but our bound is also \emph{tight} for $d=2$ since it matches
    Basit \etal's $\Omega(n \log n/\log\log n)$ lower bound.  It is
    also tight for all larger $d$, as we observe in \apndref{lb} that
    an $\Omega( n(\log n/\log\log n)^{d-1} )$ lower bound for all
    $d\ge 2$ actually follows from a proof by
    Chazelle~\cite{Chazelle90a} on a seemingly different topic (data
    structure lower bounds), three decades before Basit \etal's paper.

    \medskip%
    \item For halfspaces in $d>3$ dimensions, we prove an
    $O(n^{2\floor{d/2}/(\floor{d/2}+1)})$ bound.  Not only is this a
    significant improvement over the previous $O(n^{2d/(d+1)+\eps})$
    bound, and it is \emph{tight} for $d=5$ since there is a matching
    $\Omega(n^{4/3})$ lower bound for 5D halfspaces; see \remref{5d}.
    (The dependence on $k$ in our bound is also better than in some
    previous results.)

    \medskip%
    \item For halfspaces in 2 or 3 dimensions, we prove the first
    $O(n)$ bound, which is linear (and thus obviously tight).  The 3D
    halfspace case is particularly important, since 2D disks reduce to
    3D halfspaces by a standard lifting transformation.

    \medskip%
    \item For well-behaved shapes in the plane that have linear union
    complexity (for example, pseudo-disks), we obtain a bound that is
    very close to linear (and thus very close to tight), namely,
    $O(n\log\log n)$.  For fat triangles in the plane, our bound is
    \emph{extremely} close to linear, namely, $O(n(\log^*n)^2)$ (i.e.,
    at most two iterated logarithmic factors away from tight).
\end{itemize}

\DCGVer{\subparagraph{Our techniques}} %
\RegVer{\paragraph{Our techniques.}} %
\SODAVer{\subsubsection*{Our techniques.}} %
The combinatorial problem here is closely related to the algorithmic
problem of range searching, and techniques developed for the latter
will be useful here.  The connection with range searching can be
formally explained via the notion of \emph{biclique covers} (see
\cite{Do19} or the beginning of \secref{halfspaces}), which readily
yielded an $O(n^{2d/(d+1)+\eps})$ bound for halfspaces and an
$O(n\log^d n)$ bound for axis-aligned boxes.  To get better bounds,
however, we will solve the problem directly.

For rectangles and boxes, we use simple geometric divide-and-conquer
argument, inspired by \emph{range trees} \cite{bcko-cgaa-08}, to
obtain an $O(n\log^{d-1}n)$ bound.  For halfspaces, the key intuition
is that there cannot be too many deep points when the incidence graph
avoids $K_{k,k}$; the precise quantitive claim is stated in
\lemref{level}, which we prove using \emph{shallow cuttings} (first
introduced by \Matousek \cite{m-rph-92} for halfspace range
reporting). See \defref{shallow:cutting} for details on shallow
cuttings.  Now that we know most points are shallow, we can use
shallow cuttings again to bound the number of incidences by geometric
divide-and-conquer.

For halfspaces in 3D, this approach unfortunately generates an extra
logarithmic factor.  We present a variant of the approach using a
different, interesting recursion in which we reduce the number of
points by a fraction.  By duality, we can similarly reduce the number
of halfspaces by a fraction.  In combination, the recurrence then
yields a geometric series, summing to a linear bound.

Going back to rectangles and boxes, our shallow-cutting-based approach
for 3D halfspaces can also be applied to obtain a linear bound for 2D
3-sided rectangles.  Combined with geometric divide-and-conquer with a
nonconstant branching factor, we can then improve the $O(n\log n)$
bound for 2D general axis-aligned rectangles to the tight bound of
$O(n\log n/\log\log n)$, and similarly improve the $O(n\log^{d-1}n)$
bound for boxes in $\Re^d$ to $O(n(\log n/\log\log n)^{d-1})$.

For shapes with low union complexity in 2D, we can also adapt the
shallow cutting approach, but because of the lack of duality, we get a
different bound with an extra $\log\log n$ factor. As a reminder, the
union complexity is the descriptive complexity of the union region,
see \cite{aps-sugo-08} for details.  For fat triangles in 2D, one of
the $\log^*n$ factors comes from known bounds on the union complexity
by Aronov \etal~\cite{abes-ibulf-14}.  On the other hand, the second
$\log^*n$ factor arises from an entirely different reason,
interestingly, from the way we use shallow cuttings.  Along the way,
the ideas in our proof yield a new data structure for range reporting
for 2D fat triangles, which may be of independent interest.

The interplay between combinatorial geometry and computational
geometry is nice to see, and is one of our reasons for studying this
class of problems in the first place.  In some sense, the incidence
problems here are ``cleaner'' abstractions of the algorithmic problem
of range searching, as one does not have to be concerned with
computational cost of certain operations such as point location.  But
there are subtle differences, giving rise to different bounds.  The
connection to range searching is also what led us to realize that a
lower bound argument in Chazelle's paper~\cite{Chazelle90a} had
addressed the same combinatorial problem.

\DCGVer{%
\newcommand{\mcX}[1]{%
   \setlength{\fboxrule}{0.0pt}%
   \!\begin{minipage}{5.3cm} \fbox{%
      \begin{minipage}{5.1cm}
             \begin{equation*}%
                 \hspace*{-0.1cm}%
                 \mathclap{
                    \begin{aligned}
                      #1
                    \end{aligned}
                 }
             \end{equation*}%
         \end{minipage}
      }%
  \end{minipage}
} }%
\RegVer{%
   \newcommand{\mcX}[1]{%
      \ensuremath{#1} }%
}%

\begin{figure}[t]
    \centering%
    \SODAVer{\small}%
    \begin{tabular}{|*{4}{c|}}
      \hline
      Dim
      &
        Objects
      &
        Bound
      &
        ref
      \\
      \hline
      \hline
      $d=1\Bigr.$
      &
        intervals
      &
        $\leq kn + 3km$
      &
        \lemref{intervals}%
      \\
      \hline
      $d>1$
      &
        $\begin{array}{c}
          \text{axis-aligned}\\
          \text{boxes}
         \end{array}$
      &
        \DCGVer{%
        \mcX{
        O\bigl( k n(\log n/\log\log n)^{d-1}
        \qquad\\
      +\ km\log^{d-2+\eps}n\bigr)
      }}%
      \RegVer{%
      $O\bigl( k n(\log n/\log\log n)^{d-1} +\
      km\log^{d-2+\eps}n\bigr)$
      }
      &
        \thmref{box:improved}%
      \\
      \hline
      \hline
      $d>1$
      &
        $\delta$-polyhedra
      &
        \mcX{\ensuremath{O\bigl(k n(\log n/\log\log n)^{\delta-1}}
        \DCGVer{\qquad\\}
      \ensuremath{+\  km\log^{\delta-2+\eps}n\bigr)}
      \RegVer{\ensuremath{\Bigr.}}
      }
      &
        \lemref{s:polytopes}/\thmrefs{box:improved}%
      \\
      \hline\hline
      $d=2,3$
      &
        halfplanes
      &
        $O\pth{ k(n+m) }\Bigr.$
      &
        \lemref{s:c:halfspaces:3d}%
      \\
      \hline
      $d>3$
      &
        halfspaces
      &
        \mcX{
        O\Bigl(k^{2/(\floor{d/2}+1)}
        (mn)^{\floor{d/2}/(\floor{d/2}+1)}\DCGVer{\qquad\\}
      + k (n+m) \Bigr)
      \RegVer{\biggr.}
      }
      &
        \lemref{s:c:halfspaces}%
      \\
      \hline
      \hline
      $d=2$
      &
        disks
      &
        $O\pth{ k(n+m) }\Bigr.$
      &
        \corref{balls} (I)
      \\
      \hline
      $d>3$
      &
        balls
      &
        \mcX{
        O\Bigl(
        k^{2/(\ceil{d/2}+1)}(mn)^{\ceil{d/2}/(\ceil{d/2}+1)}
        \DCGVer{\quad\\}
      + k (n+m) \Bigr)
      \RegVer{\Biggr.}
      }
      &
        \corref{balls} (II)
      \\
      \hline
      \hline
      $d=2$
      &
        $
        \begin{array}{c}
          \text{shapes union}\\
          \text{complexity }\\
          \FZero(m)
        \end{array}$
      &
        \begin{math}
            O( k n +  k \FZero(m) (\log\log m +\log k)) \Bigr.
        \end{math}
      &
        \thmref{incidences}%
      \\
      \hline
      $d=2$
      &
        pseudo-disks
      &
        \begin{math}
            O( kn + km (\log\log m + \log k)) \Bigr.
        \end{math}
      &
        \corref{p:disks}%
      \\
      \hline
      $d=2$
      &
        fat triangles
      &
        \mcX{%
        O( kn + km (\log^*m)(\log^*m \DCGVer{\qquad\\}
      + \log\log k)) \Bigr.
      }
      &
        \corref{fat:triangles}%
      \\
      \hline
    \end{tabular}
    \smallskip%
    \captionof{table}{Results so far where an induced $K_{k,k}$ is
       forbidden.  A $\delta$-polyhedra is the intersection of
       halfspaces, where each of their boundaries is orthogonal one of
       $\delta$ prespecified directions (they can be unbounded). Thus
       an axis-aligned box in $d$ dimensions is a $d$-polyhedra.  }
    \tbllab{results}
\end{figure}

\section{Intervals, Rectangles and Boxes}

\subsection{Intervals}

\begin{lemma}
    \lemlab{intervals}%
    Let $\PS$ be a set of $n$ points, and let $\BS$ be a set of $m$
    intervals on the real line. If the graph $\G(\PS,\BS)$ does not
    contain $K_{k,k}$, then the maximum number of incidences between
    $\PS$ and $\BS$ is $I_1(n,m) \leq kn + 3km$.
\end{lemma}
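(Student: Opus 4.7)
The plan is to prove the bound by a charging argument that exploits the one-dimensional structure of intervals together with the $K_{k,k}$-free hypothesis. Sort the points of $\PS$ along the line. For each incidence $(\p,\obj)\in \G(\PS,\BS)$, call it \emph{shallow} if $\p$ is among the $k$ leftmost points of $\PS\cap \obj$, and \emph{deep} otherwise. The two cases will be bounded separately, each using one of the two parameters $m$ and $n$.

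For shallow incidences, the definition itself gives at most $k$ of them per interval, yielding at most $km$ in total. The main content of the proof lies in bounding the deep incidences. Here I would fix a point $\p$ and consider the intervals $\obj_1,\ldots,\obj_t$ giving rise to deep incidences with $\p$. By definition, each $\obj_j$ contains $\p$ together with at least $k$ points of $\PS$ strictly to the left of $\p$. Since $\obj_j$ is an interval, the set $\obj_j \cap \PS$ is a contiguous block in the sorted order of $\PS$; hence $\obj_j$ must in fact contain the $k$ specific points of $\PS$ lying immediately to the left of $\p$. Call this fixed set $Q = \{q_1,\ldots,q_k\}$, depending only on $\p$ (and if fewer than $k$ points of $\PS$ lie to the left of $\p$, no deep incidences at $\p$ are possible).

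Now the $K_{k,k}$-free hypothesis bites: the intervals $\obj_1,\ldots,\obj_t$ all contain every point of $Q$, so if $t \geq k$ we would obtain a $K_{k,k}$ in $\G(\PS,\BS)$ on the vertex sets $Q$ and $\{\obj_1,\ldots,\obj_k\}$. Therefore $t \leq k-1$, giving at most $(k-1)n$ deep incidences overall. Summing the two bounds yields $I(\PS,\BS) \leq km + (k-1)n \leq kn + km \leq kn + 3km$, as claimed.

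The only place that requires genuine care is the step where one argues that a deep interval at $\p$ necessarily contains the $k$ left-neighbors of $\p$ in $\PS$; once this ``common set'' is extracted, the $K_{k,k}$-free condition immediately caps the number of deep intervals per point. The slack between my estimate and the stated $kn+3km$ is harmless, and in fact suggests that a symmetric version of the argument (also classifying by the $k$ rightmost points of $\obj$) is available but not needed for the stated bound.
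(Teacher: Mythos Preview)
Your proof is correct and in fact yields the sharper bound $I_1(n,m)\le (k-1)n + km$, which is stronger than the stated $kn+3km$. The approach, however, differs from the paper's.

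The paper partitions the sorted point set into $\lceil n/k\rceil$ consecutive blocks of $k$ points each and analyzes the incidences block by block: for each block $\PS_i$, at most $k-1$ intervals can contain the whole block (by the $K_{k,k}$-free hypothesis), while every other incident interval must have an endpoint inside $\CHX{\PS_i}$; summing over blocks and charging endpoints gives $kn+3km$. Your argument instead classifies each incidence as \emph{shallow} (among the $k$ leftmost points of the interval) or \emph{deep}, bounds shallow incidences trivially by $km$, and then observes that all deep intervals through a fixed point $\p$ must contain the same $k$ left-neighbors of $\p$, so the $K_{k,k}$-free hypothesis caps them at $k-1$ per point.

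Both proofs hinge on the same structural fact---that $\PS\cap\obj$ is a contiguous block in the sorted order---but exploit it differently. Your charging argument is a bit more direct and loses less in the constants; the paper's block decomposition is closer in spirit to the range-tree recursion used later for boxes in higher dimensions.
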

\begin{proof}
    Assume the points of $\PS$ are $\p_1 < \p_2 < \cdots < \p_n$. We
    break $\PS$ into $N = \ceil{n/k}$ sets, where the $i$\th set is
    \begin{equation*}
        \PS_i = \{\p^{}_{k(i-1)+1}, \ldots, \p^{}_{k(i-1)+k} \},
    \end{equation*}
    for $i=1,\ldots, N$, where all the sets contain $k$ points, except
    the last set which might contain fewer points.  Now, let
    \begin{math}
        \BS_i = \Set{  \I \in \BS }{\I \cap \PS_i \neq \emptyset },
    \end{math}
    and $m_i = \cardin{\BS_i}$. For $i < N$, at most $k-1$ intervals
    of $\BS_i$ contains all the points of $\PS_i$, and all other
    intervals of $\BS_i$ must have an endpoint in
    $\CHX{\PS_i} =[ \min \PS_i, \max \PS_i]$. In particular, let $e_i$
    be the number of intervals of $\BS_i$ with an endpoint in
    $\CHX{\PS_i}$ that do not contain this interval fully.  We have
    \begin{align*}
      I(\PS,\BS)%
      &=%
        {\sum_{i=1}^{N-1}} I( \BS_i, \PS_i)
        + I(\BS_N, \PS_N)
      \leq%
      \smash{\sum_{i=1}^{N-1}} \bigl( e_i(k-1) + (k-1)k \bigr)
      + \cardin{\PS_N} m
      \\&
      \leq
      2m( k-1) + k(k-1)(N-1) + k m
      <
      k(n +  3m) . %
    \end{align*}
\end{proof}

\subsection{Axis-aligned boxes\SODAVer{.}}%
\seclab{boxes}

For integers $\alpha \leq \beta$, let
$\IRY{\alpha}{\beta} = \{ \alpha, \alpha+1, \ldots, \beta\}$ denote
the \emphw{range} between $\alpha$ and $\beta$.  A \emphi{dyadic
   range} is a range $J= \IRY{s2^i}{(s+1)2^i - 1}$, for some
non-negative integers $s$ and $i$, where $i$ is the \emphw{rank} of
$J$. Such a range has two \emphw{children} -- specifically, the dyadic
ranges $\IRY{s2^i}{s2^i + 2^{i-1} - 1}$ and
$\IRY{s2^i + 2^{i-1}}{(s+1)2^i - 1}$.  Let $\D(n)$ be the set of all
dyadic ranges contained in $\IRX{n}=\IRY{1}{n}$.

The following is well known, and we provide a proof in \apndref{canon}
for completeness.

\SaveContent{\BodyLemmaCanon}{%
   Let $n > 1$ be an integer, and consider an arbitrary range
   $I = \IRY{\alpha}{\beta} \subseteq \IRX{n}$.  Let $\canonX{I}$
   denote a minimal (cardinality wise) union of \emph{disjoint} dyadic
   ranges that covers $I$.  The set $\canonX{I}$ is unique and has
   $\leq 2\log n$ ranges, where $\log$ is in base $2$.  }

\begin{lemma}
    \lemlab{canon}%
    \BodyLemmaCanon{}
\end{lemma}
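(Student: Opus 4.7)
The plan is to prove the lemma by exhibiting $\canonX{I}$ explicitly as the collection $\mathcal{M}(I)$ of all \emph{maximal} dyadic ranges contained in $I$, where a dyadic range $J \subseteq I$ is maximal if its dyadic parent (the unique dyadic range of which $J$ is a child) is not contained in $I$. Thinking of $\D(n)$ as a binary tree (with children as defined in the excerpt), $\mathcal{M}(I)$ is the set of ``highest'' descendants contained in $I$. This immediately gives uniqueness, since $\mathcal{M}(I)$ is defined intrinsically.

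First I would show $\mathcal{M}(I)$ is a disjoint cover of $I$. For every $x \in I$, walk up the chain of dyadic ancestors of $\{x\}$; the highest ancestor still contained in $I$ exists (the singleton works) and is maximal, so $x$ is covered. Two distinct maximal dyadic ranges cannot be nested (else the inner one is not maximal), and two dyadic ranges are either nested or disjoint, so they must be disjoint. Next I would show minimality: let $\mathcal{C}$ be any disjoint dyadic cover of $I$, and suppose some $J \in \mathcal{C}$ is not maximal, so its parent $J'$ also lies in $I$. The sibling of $J$ tiles $J'$ together with $J$ and must itself be covered by some ranges of $\mathcal{C}$ disjoint from $J$; by a straightforward induction on rank, the total number of $\mathcal{C}$-ranges strictly inside $J'$ is at least two, so replacing them all by the single range $J'$ produces a strictly smaller disjoint cover. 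Iterating, every minimum cover equals $\mathcal{M}(I)$, confirming both uniqueness and minimality.

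For the size bound, I would argue that at each rank $i$ the set $\canonX{I} = \mathcal{M}(I)$ contains at most two ranges. Suppose for contradiction that three distinct rank-$i$ dyadic ranges lie in $\mathcal{M}(I)$; since they are disjoint and of equal length, pick three that are consecutive among the rank-$i$ ranges of $\mathcal{M}(I)$, say with left endpoints $s_1 2^i < s_2 2^i < s_3 2^i$. Because rank-$i$ ranges pair up into siblings under a common rank-$(i+1)$ parent — $[s2^i,(s{+}1)2^i{-}1]$ and its neighbor are siblings iff $s$ is even — among any three consecutive rank-$i$ indices two are siblings. Their parent, a rank-$(i+1)$ dyadic range, is contained in $I$ since it equals the union of two ranges contained in $I$, contradicting maximality of each of its children. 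Thus each rank contributes at most two ranges. Since the only ranks that can occur are $0, 1, \ldots, \floor{\log n}$, we get $\cardin{\canonX{I}} \le 2(\floor{\log n}+1) \le 2\log n$ for $n \ge 2$.

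The main subtleties I expect are: (i) treating the case where $n$ is not a power of $2$ (so $\IRX{n}$ itself is not dyadic, and the ``dyadic tree'' is really the forest of dyadic ranges contained in $\IRX{n}$ — the ``parent'' of a top-rank range may not exist, which is fine since those ranges are automatically maximal); and (ii) tightening the rank count to land exactly at $2\log n$ rather than $2\log n + O(1)$, which just requires observing that for $I \ne \IRX{n}$ the top rank cannot contribute two ranges simultaneously.
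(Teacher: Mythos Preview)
Your overall approach coincides with the paper's: what the paper calls a ``boundary range'' is precisely the parent of one of your maximal ranges, and taking the child of each boundary range that lies in $I$ recovers your $\mathcal{M}(I)$. The disjointness, covering, and minimality/uniqueness arguments are fine.

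The gap is in the ``at most two per rank'' step. You pick three rank-$i$ ranges in $\mathcal{M}(I)$ with indices $s_1<s_2<s_3$ and assert that two must be siblings because ``among any three consecutive rank-$i$ indices two are siblings.'' That pigeonhole is valid only for \emph{integer-consecutive} indices $s,s{+}1,s{+}2$; you have not shown $s_1,s_2,s_3$ are consecutive integers, and ``consecutive among the rank-$i$ ranges of $\mathcal{M}(I)$'' does not imply this, since higher-rank maximal ranges can sit between them (for instance, in $I=[3,12]$ the two rank-$0$ maximal ranges are $\{3\}$ and $\{12\}$, separated by rank-$2$ ranges). The fix is exactly what the paper does: the parent of a maximal range lies partly inside $I$ (it contains its child) and partly outside (by maximality), hence must cover an endpoint of $I$; at most two rank-$(i{+}1)$ ranges have this property, and each has at most one child contained in $I$, giving at most two maximal rank-$i$ ranges. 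Once you argue it this way, the tightening to exactly $2\log n$ (which you flag but do not carry out) also falls out from the paper's short case analysis on whether $\IRX{n}$ is itself a boundary range.
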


A $d$-dimensional \emphi{box} is the Cartesian product of $d$
intervals (we consider here only axis-aligned boxes). We can now apply
an inductive argument on the dimension, to get a bound on the number
of incidences between points and boxes.

\begin{lemma}
    \lemlab{easy}%
    Let $\PS$ be a set of $n$ points in $\Re^d$, and let $\BS$ be a
    set of $m$ (axis-parallel) boxes in $\Re^d$. If the graph
    $\G(\PS,\BS)$ does not contain $K_{k,k}$, then the maximum number
    of incidences between $\PS$ and $\BS$ is
    $I_{d}(n,m) = O\bigl( k( n +m) \log^{d-1} n \bigr)$.
\end{lemma}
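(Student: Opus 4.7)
The plan is to induct on the dimension $d$. The base case $d=1$ is exactly \lemref{intervals}, which gives a bound of $O(k(n+m))$. For the inductive step, I would use a canonical dyadic decomposition in the first coordinate, in the spirit of range trees.

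Sort the points of $\PS$ by their first coordinate, and identify each point $\p \in \PS$ with its rank-index in $\IRX{n}$. For each box $\obj \in \BS$, its first-coordinate projection is an interval that translates into an index range $I(\obj) \subseteq \IRX{n}$; by \lemref{canon}, this range decomposes into at most $2\log n$ disjoint dyadic ranges forming $\canonX{I(\obj)}$. For each dyadic range $J \in \D(n)$, define
\begin{equation*}
   \PS_J = \Set{\p \in \PS}{\mathrm{idx}(\p) \in J}
   \qquad\text{and}\qquad
   \BS_J = \Set{\obj \in \BS}{J \in \canonX{I(\obj)}},
\end{equation*}
and view the points and boxes in the pair $(\PS_J, \BS_J)$ as living in $\Re^{d-1}$ via projection on the last $d-1$ coordinates.

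The key observation is that for any $\obj \in \BS_J$ and any $\p \in \PS_J$, we have $\mathrm{idx}(\p) \in J \subseteq I(\obj)$, so $\p$ lies inside $\obj$ in the first coordinate automatically; hence an incidence $\p \in \obj$ in $\Re^d$ is equivalent to an incidence of the projected point with the projected box in $\Re^{d-1}$. Partitioning incidences by which canonical range $J$ of $\obj$ contains $\mathrm{idx}(\p)$ gives
\begin{equation*}
   I(\PS, \BS) \;=\; \sum_{J \in \D(n)} I(\PS_J, \BS_J).
\end{equation*}
Moreover, each sub-instance $(\PS_J, \BS_J)$ is itself $K_{k,k}$-free: if $k$ projected boxes contained $k$ projected points in $\Re^{d-1}$, the corresponding $k$ original boxes in $\BS_J$ would contain the $k$ original points of $\PS_J$ in $\Re^d$ (using the first-coordinate containment noted above), contradicting our assumption.

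Applying the inductive hypothesis $I_{d-1}(n',m') = O\bigl(k(n'+m')\log^{d-2} n'\bigr)$ to each sub-instance and summing gives
\begin{equation*}
   I(\PS, \BS)
   \;\le\; c\,k \log^{d-2} n \sum_{J \in \D(n)} \bigl(\cardin{\PS_J} + \cardin{\BS_J}\bigr).
\end{equation*}
Each point belongs to $O(\log n)$ dyadic ranges (one per rank containing its index), so $\sum_J \cardin{\PS_J} = O(n\log n)$; each box contributes to at most $2\log n$ dyadic ranges by \lemref{canon}, so $\sum_J \cardin{\BS_J} = O(m\log n)$. Combining yields the claimed $O\bigl(k(n+m)\log^{d-1} n\bigr)$ bound.

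The only subtle point is verifying that the $K_{k,k}$-free property descends to each sub-instance, which is what I expect to be the main (though modest) obstacle; everything else is bookkeeping around the standard range-tree style decomposition. The argument needs only that restricting to a subset of points and boxes cannot create a forbidden biclique, together with the observation that the first-coordinate containment is automatic inside $\PS_J$ and $\BS_J$.
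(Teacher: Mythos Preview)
Your proposal is correct and follows essentially the same range-tree style argument as the paper: induct on $d$, decompose each box's first-coordinate range into $O(\log n)$ dyadic pieces via \lemref{canon}, project each dyadic subproblem to $\Re^{d-1}$, apply the inductive hypothesis, and sum using $\sum_J \cardin{\PS_J}=O(n\log n)$ and $\sum_J \cardin{\BS_J}=O(m\log n)$. Your explicit verification that the $K_{k,k}$-free property descends to each subinstance is a nice touch that the paper leaves implicit.
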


\begin{proof}
    Denote the points of $\PS$ in increased order of their $x$ value
    by $\p_1, \p_2, \ldots, \p_n$. For any range
    $\range = \IRY{\alpha}{\beta} \in \D(n)$, let
    $\PS_\range = \{ \p_\alpha, \ldots, \p_\beta\}$.  For a box
    $\rect \in \BS$, let $\PS_\rect$ be the set of all points of $\PS$
    with $x$-coordinate in the $x$-interval of $\rect$. Clearly, there
    is a range $\IRY{\alpha}{\beta} \subseteq \IRX{n}$ (potentially
    empty), such that $\PS_\rect = \PS_{\IRY{\alpha}{\beta}}$. The
    corresponding decomposition into dyadic sets of $\PS_\rect$ is
    \begin{equation*}
        \PPS( \rect )%
        =%
        \Set{ \PS_\range}{\range \in \canonX{\IRY{\alpha}{\beta}}}.
    \end{equation*}

    This gives rise to a natural decomposition of the original
    incidence graph into dyadic incidences graphs. Specifically, for
    every $\range \in \D(n)$, let $\BS_\range$ be the set of all boxes
    $\rect$ such that $\PS_\range \in \PPS(\rect)$. The incidence
    graph $\G_\range = \G(\PS_\range, \BS_\range)$, has the property
    that all the boxes of $\BS_\range$ have intervals on the $x$-axis
    that contains the $x$ coordinates of all the points of
    $\PS_\range$. In particular, one can interpret this as a $d-1$
    dimensional instance of boxes and points, by projecting the points
    and the boxes into a $(d-1)$-dimensional halfplane orthogonal to
    the $x$-axis. Thus, for $n_\range = \cardin{\PS_\range}$ and
    $m_\range = \cardin{\BS_\range}$, we have that
    $I_d(\PS_\range, \BS_\range) \leq I_{d-1}( n_\range, m_\range)$.

    A key property we need is that any number $i \in \IRY{1}{n}$
    participates in at most $\ceil{\log n}$ dyadic intervals of
    $\D(n)$. This implies that
    $\sum_{\range \in\D(n)} n_\range = n \ceil{\log n}$.  Similarly,
    by \lemref{canon}, we have
    $\sum_{\range \in\D(n)} m_\range \leq 2m {\log n}$.  For $d=2$,
    by \lemref{intervals}, we have
    \begin{align*}
      I_2(\PS,\BS)
      &\leq
        \sum_{\range \in\D(n)} I_2(\PS_\range , \BS_\range)
        \leq%
        \sum_{\range \in\D(n)} I_1(n_\range , m_\range)
        \leq%
        k \sum_{\range \in\D(n)} \pth{ n_\range  + 3 m_\range}
        \DCGVer{\\&}
      \leq%
      k \pth{n + 6 m} \ceil{\log n},
    \end{align*}
    For $d>2$, we use induction on the dimension, which implies
    \begin{equation*}
        I_d(\PS,\BS)%
        \leq%
        \sum_{\range \in\D(n)} I_{d-1}(n_\range , m_\range) \leq%
        k \sum_{\range \in\D(n)} O\pth{ (n_\range + m_\range)
           \log^{d-2} n} = %
        O\pth {k \pth{n + m}\log^{d-1} n}.
    \end{equation*}
\end{proof}

\subsection{Polytopes formed by the intersection of \TPDF{$s$}{s} %
   translated halfspaces\SODAVer{.}}

Let $H = \{h_1, \ldots, h_s\}$ be a set of $s$ halfspaces in
$\Re^d$. Let
$\mathcal{Z}(H) = \Set{ \cap_i (h_i +v_i) }{v_i \in \Re^d}$ be the
family of all polytopes formed by the intersection of translates of
the halfspaces of $H$. Let the \emphi{dimension} $\delta$ of $H$ be
the size of smallest set of vectors orthogonal to the boundary
hyperplanes of $H$ (thus, if two halfspaces of $H$ have parallel
boundaries, then they contribute only one to the dimension). Thus, the
family of $d$-dimensional boxes in $\Re^d$ is generated by a set $H$
of $2d$ halfspaces of dimension $\delta=d$.

\begin{lemma}
    \lemlab{s:polytopes}%
    Let $H$ be a set of $s$ halfspaces of $\Re^d$, with dimension
    $\delta$. Let $\PS$ be a set of $n$ points in $\Re^d$, and let
    $\BS \subseteq \mathcal{Z}(H)$ be a set of $m$ polyhedra. If the
    graph $G( \PS, \BS)$ does not contain $K_{k,k}$, then
    $I(\PS, \BS) = O(k(n+m) \log^{\delta-1} n)$.
\end{lemma}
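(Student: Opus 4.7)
The plan is to imitate the proof of \lemref{easy} almost verbatim, but with the induction carried out on the parameter $\delta$ (the number of distinct normal directions) rather than on the ambient dimension $d$. The key insight is that the $d$ only plays the role of the space in which the points live; what actually drives the logarithmic factors is $\delta$, because each pass of the dyadic divide-and-conquer kills off one direction from $H$.

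For the base case $\delta = 1$, all halfspaces in $H$ have boundaries orthogonal to a single direction $v$. Any polytope in $\mathcal{Z}(H)$ is then an intersection of translates of at most two opposite halfspaces, i.e., a slab or a halfspace in direction $v$, and containment of a point $\p$ depends only on $\langle \p, v\rangle$. Projecting $\PS$ and $\BS$ onto the line spanned by $v$ turns the instance into an intervals-vs-points problem on the real line, and \lemref{intervals} gives the desired $O(k(n+m))$ bound.

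For the inductive step $\delta \geq 2$, pick any one of the $\delta$ directions, call it $v_1$, and order the points of $\PS$ by their $v_1$-coordinate as $\p_1,\ldots,\p_n$. For each $\rect \in \BS$, its $v_1$-extent is an interval (possibly half-infinite) on the real line, determined by the one or two halfspaces of $H$ whose normals are parallel to $v_1$. Exactly as in the proof of \lemref{easy}, this interval corresponds to a range $\IRY{\alpha}{\beta} \subseteq \IRX{n}$, and \lemref{canon} gives a canonical dyadic decomposition into at most $2\log n$ pieces of $\D(n)$. Define $\PS_\range$ and $\BS_\range$ for $\range \in \D(n)$ as before, with the sum-bounds $\sum_\range n_\range \le n \log n$ and $\sum_\range m_\range \le 2m \log n$ carrying over unchanged.

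The crux is the following observation: within the subproblem $(\PS_\range, \BS_\range)$, the $v_1$-direction constraints of every $\rect \in \BS_\range$ are satisfied automatically by every $\p \in \PS_\range$. Hence $I(\PS_\range, \BS_\range) = I(\PS_\range, \BS_\range')$, where $\BS_\range'$ is obtained by replacing each $\rect = \cap_i(h_i+v_i) \in \BS_\range$ by the intersection of only those translates whose defining halfspace has normal not parallel to $v_1$. The new polytopes lie in $\mathcal{Z}(H')$ for $H' \subseteq H$ of dimension exactly $\delta-1$, and the incidence graph remains $K_{k,k}$-free as it is a subgraph of the original. Induction gives $I(\PS_\range, \BS_\range') = O\bigl(k(n_\range + m_\range)\log^{\delta-2} n\bigr)$, and summing over $\range \in \D(n)$ yields
\begin{equation*}
    I(\PS,\BS)
    \leq
    \sum_{\range\in\D(n)} I(\PS_\range,\BS_\range')
    =
    O\!\pth{k(n+m)\log^{\delta-1} n}.
\end{equation*}

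The main obstacle is really a bookkeeping one: making sure that the $v_1$-constraints can indeed be stripped without affecting the incidences on $\PS_\range$, and that the residual system genuinely has dimension $\delta-1$ so the induction hypothesis applies. Both facts follow directly from the definitions, so no new geometric idea beyond the range-tree argument of \lemref{easy} is needed.
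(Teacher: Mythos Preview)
Your proof is correct, but the paper takes a more concise route. Rather than re-running the dyadic divide-and-conquer of \lemref{easy} in $\Re^d$ and inducting on $\delta$, the paper observes that the linear map $f(\p) = (v_1 \cdot \p, \ldots, v_\delta \cdot \p)$ sends $\PS$ to a point set in $\Re^\delta$ and each polytope of $\BS$ to an axis-parallel box in $\Re^\delta$, preserving incidences exactly; one then applies \lemref{easy} directly in $\Re^\delta$ as a black box. Your argument effectively unrolls that application of \lemref{easy} back into the original space $\Re^d$, peeling off one direction at a time. The paper's reduction is shorter and makes the dependence on $\delta$ (rather than $d$) immediately visible, while your version is self-contained and avoids the need to verify that the map $f$ preserves the $K_{k,k}$-free property (which it does, since the incidence graph is identical).
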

\begin{proof}
    Let $v_1, \ldots, v_\delta$ be the vectors orthogonal to the
    boundary hyperplanes of the halfspaces of $H$. Let $f$ be the
    mapping of a point $\p \in \Re^d$ to
    $\permut{ v_1 \cdot \p, \ldots, v_\delta \cdot \p}$. Clearly, this
    maps a polytope of $\BS$ to an axis-parallel box in
    $\Re^\delta$. A point $\p$ is inside such a polytope $\iff$
    $f(\p)$ is inside the mapped box. The result now readily follows
    from \lemref{easy}.
\end{proof}

\section{Halfspaces}%
\seclab{halfspaces}

In this section, we consider the version of the incidence problem
where $\PS$ is a set of $n$ points in $\Re^d$, and $\BS$ is a set of
$m$ halfspaces in $\Re^d$.  We first mention two simple approaches
giving weaker bounds, before describing
our best approach using shallow cuttings.

\DCGVer{\paragraph{Approach I: Reduction to intervals.}} %
\RegVer{\paragraph{Approach I: Reduction to intervals.}} %
\SODAVer{\subsubsection*{Approach I: Reduction to intervals.}} %

One can reduce the problem to one-dimensional problem involving
intervals, using the result of Welzl \cite{w-stlcn-92}, which provides
a spanning path of the points, such that any hyperplane crosses it
only $O( n^{1-1/d})$ times. Then each halfplane becomes a set of
$O(n^{1-1/d})$ intervals on this spanning path, and one apply the
above bound for intervals. This yields a bound of
$O(k n + km n^{1-1/d})$ on the incidences -- since this bound is
inferior, we provide no further details.

\DCGVer{\paragraph{Approach II: Reduction to biclique covers.}} %
\RegVer{\paragraph{Approach II: Reduction to biclique covers.}} %
\SODAVer{\subsubsection*{Approach II: Reduction to biclique
      covers.}} %

We next mention a different approach, also observed by Do~\cite{Do19}, based on the following concept about compact representations of graphs:

\begin{definition}[Biclique covers]
    Given a graph $G=(V,E)$, a \emphi{biclique cover} is a collection
    of pairs of vertex subsets $\{(A_1,B_1),\ldots,(A_\ell,B_\ell)\}$
    such that $E = \bigcup_{i=1}^\ell (A_i\times B_i)$.  The
    \emphi{size} of the cover refers to
    $\sum_{i=1}^\ell (\cardin{A_i}+\cardin{B_i})$.
\end{definition}

\begin{remarks}
    Biclique covers~\cite{AgarwalAAS94} are closely related to
    \emphi{range searching}, where the goal is to build data
    structures for a reset $\PS$ of $n$ points, so that given a query
    object $\obj$, we can quickly report or count the points in
    $\PS\cap\obj$.  This idea was also used in the proof of
    \lemref{easy}.  Many known data structures for range searching
    produce a collection of so-called ``canonical subsets'' of $\PS$,
    so that for any query object $\obj$, the points in $\PS\cap\obj$
    can be expressed as a union of a small number of canonical
    subsets.  From such a data structure, we can form a biclique cover
    of $\G(\PS,\BS)$ by letting the $A_i$'s be the canonical subsets,
    and letting $B_i$ be the subset of all query objects $\obj\in\BS$
    that ``use'' the canonical subset $A_i$ in their answers.  Here,
    $\sum_i \cardin{A_i}$ roughly corresponds to the space or
    preprocessing time of the data structure, and
    $\sum_i \cardin{B_i}$ roughly corresponds to the total query time.

    For halfspace range searching, there are known data structures
    \cite{Matousek92,Matousek93} that can answer $m$ queries on $n$
    points in $O\bigl((n+m+(nm)^{d/(d+1)})\log^{O(1)}n\bigr)$ total
    time, and these data structures yield biclique covers of similar
    size.  For orthogonal range searching (where the ranges are
    axis-aligned boxes), known data structures (namely, range
    trees)~\cite{bcko-cgaa-08} yield biclique covers of size
    $O((n+m)\log^d n)$.
\end{remarks}

\begin{lemma}
    \lemlab{bicliques}%
    Let $\PS$ be a set of points, and let $\BS$ be a set of
    objects. If the graph $\G(\PS,\BS)$ has a biclique cover of size
    $X$ and does not contain $K_{k,k}$, then $I(\PS,\BS)=O(kX)$.
\end{lemma}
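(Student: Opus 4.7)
The plan is to exploit the $K_{k,k}$-freeness locally on each biclique in the cover. Let $\{(A_1,B_1),\ldots,(A_\ell,B_\ell)\}$ be a biclique cover of $\G(\PS,\BS)$ of size $X = \sum_{i=1}^{\ell}(\cardin{A_i}+\cardin{B_i})$. By the definition of a biclique cover, $A_i\times B_i \subseteq E(\G)$, so each pair $(A_i,B_i)$ spans a complete bipartite subgraph of $\G$.

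The key observation is that this complete bipartite subgraph is an \emph{induced} $K_{\cardin{A_i},\cardin{B_i}}$ in $\G$, so the hypothesis that $\G$ avoids $K_{k,k}$ forces $\min(\cardin{A_i},\cardin{B_i}) \leq k-1$ for every $i$. Consequently, the number of edges contributed by the $i$\th biclique is
\begin{equation*}
   \cardin{A_i}\cdot\cardin{B_i}
   \;\leq\;
   (k-1)\max(\cardin{A_i},\cardin{B_i})
   \;\leq\;
   (k-1)\bigl(\cardin{A_i}+\cardin{B_i}\bigr).
\end{equation*}

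Since the bicliques collectively cover every edge of $\G$, summing over $i$ yields
\begin{equation*}
   I(\PS,\BS)
   \;=\;
   \cardin{E(\G)}
   \;\leq\;
   \sum_{i=1}^{\ell} \cardin{A_i}\cdot\cardin{B_i}
   \;\leq\;
   (k-1)\sum_{i=1}^{\ell}\bigl(\cardin{A_i}+\cardin{B_i}\bigr)
   \;=\;
   (k-1) X
   \;=\;
   O(kX).
\end{equation*}

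There is really no obstacle here: the argument is a one-line pigeonhole application of the Zarankiewicz-type bound for a complete bipartite graph that avoids $K_{k,k}$, applied biclique by biclique. The only thing to be a bit careful about is noting that double-counting of edges across overlapping bicliques only helps (we get an upper bound), so the inequality above is valid without any disjointness assumption on the cover.
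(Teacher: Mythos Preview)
Your proof is correct and follows essentially the same approach as the paper: observe that $\min(\cardin{A_i},\cardin{B_i}) < k$ for every biclique in the cover (else $\G$ would contain a $K_{k,k}$), then bound $\cardin{A_i}\cdot\cardin{B_i}$ by $k(\cardin{A_i}+\cardin{B_i})$ and sum. Your aside about the biclique being \emph{induced} is true in this bipartite setting but not actually needed, since the hypothesis forbids $K_{k,k}$ as a subgraph.
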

\begin{proof}
    Let $\{(A_1,B_1),\ldots,(A_\ell,B_\ell)\}$ be a biclique cover.
    For each $i$, we must have
    $\min\{\cardin{A_i},\cardin{B_i}\} < k$, because otherwise
    $\G(\PS,\BS)$ would contain $K_{k,k}$.  Thus,
    \begin{math}
        I(\PS,\BS)
        \leq%
        \sum_{i=1}^\ell \cardin{A_i} \cardin{B_i}%
        \leq%
        \sum_{i=1}^\ell k (\cardin{A_i} + \cardin{B_i}).
    \end{math}
\end{proof}

By the above lemma and known results on biclique covers, we
immediately obtain an $O(k(n+m)\log^d n)$ incidence bound for $n$
points and $m$ boxes in $\Re^d$; this is a logarithmic-factor worse
than the bound from the previous section (but is already better than
results from previous papers).  We also immediately obtain an
$O(k(n+m+(nm)^{d/(d+1)})\log^{O(1)}n)$ incidence bound (similar to Fox
\etal's bound~\cite{fpssz-savzp-17}) for $n$ points and $m$ halfspaces
in $\Re^d$; this improves the bound from the previous subsection, but
we will do even better (both as a function of $n$ and $m$, as well as
in terms of the dependence on $k$) in the next subsection for
halfspaces, by using specific known techniques developed for halfspace
range reporting -- namely, shallow cuttings.

\subsection{Better bounds using shallow cuttings}\seclab{shallow:cut}

\subsubsection{Preliminaries}

\begin{definition}[Duality]
    \deflab{duality}%
    The \emphi{dual hyperplane} of a point
    $\p = (\p_1, \ldots, \p_d) \in \Re^d$ is the hyperplane
    $\Dual{\p}$ defined by the equation
    $x_d = -p_d + \sum_{i=1}^{d-1} x_i p_i$. The \emphi{dual point} of
    a hyperplane $h$ defined by $x_d = a_d + \sum_{i=1}^{d-1} a_i x_i$
    is the point
    \begin{math}
        \Dual{h} = (a_1, a_2, \ldots, a_{d-1}, -a_d).
    \end{math}
\end{definition}

\begin{fact}
    \factlab{reverse}%
    Let $\p$ be a point and let $h$ be a hyperplane. Then $\p$ lies
    above $h$ $\iff$ the hyperplane $\Dual{\p}$ lies below the point
    $\Dual{h}$.
\end{fact}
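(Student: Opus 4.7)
The plan is to prove this standard duality fact by direct computation from the definitions in \defref{duality}. There is no real obstacle here; the whole claim reduces to rearranging a single linear inequality. I would proceed as follows.

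First, I would translate ``$\p$ lies above $h$'' into an algebraic inequality. Writing $\p = (p_1, \ldots, p_d)$ and using the equation $x_d = a_d + \sum_{i=1}^{d-1} a_i x_i$ defining $h$, the point $\p$ lies above $h$ precisely when
\begin{equation*}
    p_d \;>\; a_d + \sum_{i=1}^{d-1} a_i p_i.
\end{equation*}

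Next, I would translate ``$\Dual{\p}$ lies below $\Dual{h}$'' into an inequality. Recall $\Dual{\p}$ is the hyperplane $x_d = -p_d + \sum_{i=1}^{d-1} x_i p_i$, and $\Dual{h} = (a_1, \ldots, a_{d-1}, -a_d)$. Substituting the first $d-1$ coordinates of $\Dual{h}$ into the right-hand side of $\Dual{\p}$ and comparing to its $d$-th coordinate, $\Dual{h}$ lies above $\Dual{\p}$ exactly when
\begin{equation*}
    -a_d \;>\; -p_d + \sum_{i=1}^{d-1} a_i p_i.
\end{equation*}

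Finally, I would observe that moving $-p_d$ to the left-hand side and $-a_d$ to the right-hand side of the second inequality yields precisely the first inequality, so the two conditions are equivalent. Both directions of the ``$\iff$'' therefore follow simultaneously, completing the proof. The main step requiring care is simply being consistent about the sign convention on the $d$-th coordinate used in \defref{duality} (the $-p_d$ and $-a_d$), which is exactly the reason the ``above/below'' relationship flips between primal and dual.
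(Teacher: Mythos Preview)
Your proof is correct. The paper states this as a \emph{fact} without proof, so there is no paper proof to compare against; your direct computation from \defref{duality} is exactly the standard verification one would expect here.
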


Given a set of objects $T$ (e.g., points in $\Re^d$), let
$\Dual{T} = \Set{\Dual{x}}{x \in T}$ denote the dual set of objects.

\begin{definition}[Levels]
    \deflab{levels}%
    \deflab{top:bottom:lvls}%
    For a collection of hyperplanes $H$ in $\Re^d$, the (bottom)
    \emphi{level} of a point $\p \in \Re^d$, denoted by $\levelX{\p}$,
    is the number of hyperplanes of $H$ lying on or below $\p$.  The
    (bottom) \emphi{$k$-level} of $H$, denoted by $\BY{k}{H}$, is the
    closure of the set formed by the union of all the points on
    $\cup H$ with level $k$.

    The \emphw{$(\leq k)$-levels}, denoted by
    $\BY{\leq k}{H}= \cup_{i=0}^k \BY{i}{H}$, is the union of all
    levels up to $k$.
\end{definition}

By \factref{reverse}, if $h$ is a hyperplane which contains $k$ points
of $\PS$ lying on or above it (and at least one point lies on it),
then the dual point $\Dual{h}$ is a member of the $k$-level of
$\Dual{\PS}$.

\begin{definition}[Cuttings]
    Let $H$ be a set $H$ of $m$ hyperplanes in $\Re^d$. For a
    parameter $r \in [1,m]$, a \emphi{$1/r$-cutting} is a
    decomposition of $\Re^d$, %
    such that each (generalized) simplex intersects $\leq m/r$
    hyperplanes of $H$.
\end{definition}

A somewhat more refined
concept with better bounds is the following.

\begin{definition}[Shallow cuttings]
    \deflab{shallow:cutting}%
    Let $H$ be a set of $m$ hyperplanes in $\Re^d$. A
    \emphi{$k$-shallow $1/r$-cutting} is a collection of simplices such
    that:
    \begin{compactenumi}
        \smallskip%
        \item the union of the simplices covers the
        $(\leq k)$-levels of $H$ (see \defref{levels}), and
        \smallskip%
        \item each simplex intersects at most $m/r$ hyperplanes of
        $H$.
    \end{compactenumi}
    \smallskip%
    (Note, that the simplices are not necessarily interior disjoint.)
\end{definition}

Chazelle and Friedman \cite{cf-dvrsi-90} showed how to compute
$1/r$-cuttings with $O(r^d)$ simplices.  \Matousek \cite{m-rph-92} was
the first to show how to compute $k$-shallow $1/r$-cuttings of size
$O((rk/m+1)^d (m/k)^{\floor{d/2}})$.  In particular, there exist
$(m/r)$-shallow $1/r$-cuttings of size $O(r^{\floor{d/2}})$.

\subsubsection{Halfspaces for \TPDF{$d>3$}{d>3}\SODAVer{.}}%
\seclab{halfspaces:higher:d}

To obtain better incidence bounds for halfspaces, the following lemma
is the key and states that under the $K_{k,k}$-free assumption, most
points of $\PS$ are shallow, i.e., there are fewer points of $\PS$ at
higher levels.

A \emphw{upper halfspace} is one that lies vertically (in the $d$\th
dimension) above its boundary.
\begin{lemma}
    \lemlab{level}%
    Let $\PS$ be a set of $n$ points, and let $\BS$ be a set of $m$
    upper halfspaces, both in $\Re^d$. Let $H$ be the hyperplanes
    bounding $\BS$.  If the graph $\G(\PS,\BS)$ does not contain
    $K_{k,k}$, then for $r\le m/(2k)$, the number of points of $\PS$
    between the $m/r$-level and the $2m/r$-level of $H$ is at most
    $O(k\cdot r^{\floor{d/2}})$.
\end{lemma}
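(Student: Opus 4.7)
I plan to use a suitably chosen shallow cutting of $H$ to decompose the region containing the relevant points, and then bound the number of such points per cell via a short case analysis driven by the $K_{k,k}$-free assumption.

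First, I would invoke \Matousek's shallow cutting theorem to obtain a $(2m/r)$-shallow $1/(4r)$-cutting $\Cutting$ of $H$ (see \defref{shallow:cutting}): a family of $N = O(r^{\floor{d/2}})$ simplices, each crossed by at most $m/(4r)$ hyperplanes of $H$, whose union covers the $(\leq 2m/r)$-level of $H$. In particular, every $\p \in \PS$ with $\levelX{\p} \in [m/r, 2m/r]$ lies in at least one simplex $\Delta \in \Cutting$. It therefore suffices to prove that for each such $\Delta$, the set $\PS_\Delta$ of points of $\PS \cap \Delta$ whose level lies in $[m/r, 2m/r]$ has cardinality less than $k$; summing over the $N$ simplices then yields the desired bound $O(k \cdot r^{\floor{d/2}})$.

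The heart of the proof is a short dichotomy on a fixed $\Delta$. Let $a_\Delta$ denote the number of halfspaces of $\BS$ that fully contain $\Delta$ (equivalently, halfspaces whose bounding hyperplane lies fully below $\Delta$). If $a_\Delta \geq k$ and $\cardin{\PS_\Delta} \geq k$, then any $k$ such halfspaces together with any $k$ points of $\PS_\Delta$ would form a $K_{k,k}$ subgraph of $\G(\PS,\BS)$, since each such halfspace contains all of $\Delta$ and hence all of $\PS_\Delta$. Thus if $a_\Delta \geq k$, we must have $\cardin{\PS_\Delta} < k$. Otherwise $a_\Delta \leq k-1$, and then every $\p \in \PS_\Delta$, having level at least $m/r$, must be covered by at least $m/r - (k-1)$ halfspaces whose bounding hyperplane crosses $\Delta$. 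But there are only at most $m/(4r)$ crossing hyperplanes per simplex, forcing $m/r - (k-1) \leq m/(4r)$, i.e., $r \geq 3m/(4(k-1)) > m/(2k)$, which contradicts the hypothesis. So the second case is vacuous.

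The main issue, more of careful bookkeeping than a genuine difficulty, is calibrating the two parameters of the shallow cutting so that the dichotomy closes across the entire range $r \leq m/(2k)$. A cutting that is too coarse (e.g.\ the default $(m/r)$-shallow $1/r$-cutting) leaves insufficient slack to rule out the ``few fully-containing halfspaces'' case. Adopting instead a $(2m/r)$-shallow $1/(4r)$-cutting (any similar constant adjustment works) resolves this at the cost of only constant factors in the cell count, since \Matousek's bound still gives $O(r^{\floor{d/2}})$ simplices.
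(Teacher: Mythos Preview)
Your proposal is correct and follows essentially the same approach as the paper's proof: take a shallow cutting covering the $(\le 2m/r)$-level, and use the $K_{k,k}$-free assumption to bound the number of points per cell. The paper uses a $1/(2r)$-cutting rather than your $1/(4r)$-cutting, which lets it conclude directly that every relevant simplex has at least $m/(2r)\ge k$ hyperplanes fully below it (so your second dichotomy branch never needs to be written out), but this is a cosmetic difference in constants, not in substance.
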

\begin{proof}
    Compute a $2m/r$-shallow $1/(2r)$-cutting $\Xi$ of $H$, consisting
    of $O\bigl(r^{\floor{d/2}}\bigr)$ simplices.  The simplices in
    $\Xi$ cover all points below the $2m/r$-level.  Consider a simplex
    $\nabla\in\Xi$ that contains at least one point between the
    $m/r$-level and the $2m/r$-level.  Since $\nabla$ contains a point
    of level at least $m/r$ and intersects at most $m/(2r)$
    hyperplanes, the number of hyperplanes completely below $\nabla$
    is at least $m/r-m/(2r)=m/(2r)\ge k$.  By the $K_{k,k}$-free
    assumption, we must have $\cardin{\nabla\cap \PS} < k$.
\end{proof}

We now prove our main result for halfspaces in dimensions $d>3$:

\begin{lemma}
    \lemlab{s:c:halfspaces}%
    Let $\PS$ be a set of $n$ points, and let $\BS$ be a set of $m$
    halfspaces, both in $\Re^d$ for a constant $d>3$. If the graph
    $\G(\PS,\BS)$ does not contain $K_{k,k}$, then the maximum number
    of incidences between $\PS$ and $\BS$ is
    \begin{equation*}
        O\bigl(
        k^{2/(\floor{d/2}+1)}(mn)^{\floor{d/2}/(\floor{d/2}+1)} + k\,
        (n+m)\bigr).%
    \end{equation*}
\end{lemma}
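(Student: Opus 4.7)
The plan is to reduce to the case of upper halfspaces (by splitting $\BS$ into $O(1)$ orientation classes), assume $n\geq m$ using the symmetry of $I(\PS,\BS)$ under duality (\defref{duality}), and prove the bound $I(n,m)=O(T(n,m)+kn)$, where $T(n,m)=k^{2/(d'+1)}(nm)^{d'/(d'+1)}$ with $d'=\lfloor d/2\rfloor$; the reversed case supplies the $km$ contribution. The overall strategy is a \emph{one-level} divide-and-conquer via shallow cuttings, combined with a direct bound obtained from \lemref{level} that will be applied to the recursive subproblem.

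First I would derive a ``direct bound'' by writing $I(\PS,\BS)=\sum_{\p\in\PS}\mathrm{level}(\p)$, splitting the sum at a level threshold $\ell^\ast$, and summing \lemref{level} over geometric shells. For $d'\geq 2$ the shell contributions form a decreasing geometric series, giving $I=O(n\ell^\ast+km^{d'}/(\ell^\ast)^{d'-1})$; optimizing $\ell^\ast$ produces a \emph{primal direct bound} $O(k^{1/d'}m\,n^{(d'-1)/d'})$, and duality yields a symmetric \emph{dual direct bound} $O(k^{1/d'}n\,m^{(d'-1)/d'})$. The latter will be crucial at the base case.

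For the main step, let $H$ denote the bounding hyperplanes of $\BS$ and let $r$ be a parameter to be fixed. I would build a $(2m/r)$-shallow $(1/r)$-cutting $\Xi$ of $H$ of size $O(r^{d'})$, and split $\PS$ into shallow points (level $\leq 2m/r$) and deep points. Summing \lemref{level} geometrically over shells, the deep part contributes $O(km\,r^{d'-1})$ incidences. For each $\Delta\in\Xi$, let $n_\Delta$ count shallow points in $\Delta$ and $b_\Delta$ count hyperplanes entirely below $\Delta$. Shallow-point incidences split into those with hyperplanes intersecting $\Delta$ (contributing $I(n_\Delta,m/r)$ per simplex) and those with below-hyperplanes (contributing $n_\Delta b_\Delta$). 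The $K_{k,k}$-free hypothesis forces $\min(n_\Delta,b_\Delta)<k$ and hence $n_\Delta b_\Delta\leq k(n_\Delta+b_\Delta)$; summing with $\sum_\Delta b_\Delta=O(mr^{d'-1})$ gives a total below-hyperplane bound $O(kn+km\,r^{d'-1})$. Jensen's inequality (the target is concave in $n$, exponent $<1$) reduces to the uniform case, yielding the one-step recursion
\[
  I(n,m)\;\leq\;r^{d'}\,I(n/r^{d'},m/r)+O\!\pth{kn+km\,r^{d'-1}}.
\]

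To close, I would choose $r=\pth{n^{d'}/(k^{d'-1}m)}^{1/((d'-1)(d'+1))}$, which is arranged so that (i) the extra term $km\,r^{d'-1}$ equals exactly $T(n,m)$, and (ii) under $n\geq m$, the subproblem has $n'=n/r^{d'}\leq m'=m/r$, so the \emph{dual} direct bound from the previous paragraph applies. A direct exponent chase then shows that the dual direct bound on $(n',m')$ equals precisely $T(n',m')$, and the identity $r^{d'}T(n',m')=T(n,m)$ delivers $I(n,m)\leq T(n,m)+O(kn+T(n,m))=O(T(n,m)+kn)$; the trivial regimes $n<k$ or $m<k$ are handled by $I\leq nm\leq k(n+m)$. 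The hard part is exactly this balancing: it is a genuine algebraic coincidence that one $r$ simultaneously makes both the extra term and the subproblem's direct bound land on the target, which is what lets the recursion close in a single step with no logarithmic loss. A secondary concern is justifying the uniform-$n_\Delta$ reduction, which follows from the concavity of the target in $n$ once the shape of the final bound is adopted as the inductive ansatz.
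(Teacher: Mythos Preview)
Your proposal is correct and follows essentially the same approach as the paper: the same shallow-cutting recursion with the same choice of $r$, the same geometric-shell summation of \lemref{level} for the deep points, and the same final appeal to a dual base-case bound on the subproblem. The paper packages a few steps slightly differently---it obtains the base-case bound by one application of the recursion with $r=m/(2k)$ rather than by a separate shell optimization, it bundles the ``below'' hyperplanes together with the ``crossing'' ones (both counts are $O(m/r)$, so no separate biclique argument is needed), and it equalizes the $n_\Delta$ by explicitly subdividing cells rather than appealing to a concavity ansatz---but these are presentational choices, not substantive differences.
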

\begin{proof}
    Without loss of generality, assume that all halfspaces in $\BS$
    are upper halfspaces (since we can treat the lower halfspaces
    separately in a similar way and add the two incidence bounds).
    Let $I(n,m)$ denote the maximum number of incidences between $n$
    points and $m$ upper halfspaces, under the $K_{k,k}$-free
    assumption.  Let $r\le m/(2k)$ be a parameter.

    Let $H$ be the set of hyperplanes bounding $\BS$.  We break $\PS$
    into $O(\log r)$ classes. Specifically, a point $p \in \PS$ is in
    $\PS_0$ if $p$ is below the $m/r$-level of $H$. For $i>0$,
    $p \in \PS_i$ if it is between the $2^{i-1}m/r$-level and the
    $2^{i}m/r$-level of $H$.

    For $i > 0$, we have
    $\cardin{\PS_i} = O\bigl(k (r/2^{i})^{\floor{d/2}}\bigr)$ by
    \lemref{level}.  It follows that
    \begin{equation*}
        I(\PS_i, \BS)
        \le
        O( \cardin{\PS_i} \cdot 2^{i}m/r )
        =
        O\bigl(km (r/2^{i})^{\floor{d/2}-1}\bigr).
    \end{equation*}
    Summing over all $i>0$, yields
    $\sum_{i>0} I(\PS_i,\BS) = O\bigl(kmr^{\floor{d/2}-1}\bigr)$
    (assuming $d>3$).

    For $i=0$, compute an $m/r$-shallow $1/r$-cutting $\Xi_0$ of $H$,
    consisting of $O(r^{\floor{d/2}})$ simplices.  The simplices in
    $\Xi_0$ cover all points of $\PS_0$.  Consider a simplex
    $\nabla\in\Xi_0$ that intersects $\PS_0$.  Since $\nabla$ contains
    a point of level at most $m/r$ and intersects at most $m/r$
    hyperplanes, the number of hyperplanes intersecting or below
    $\nabla$ is $O(m/r)$.  By subdividing the simplices into subcells,
    we can ensure that each subcell contains
    $O\bigl(\bigl\lceil{{n}/{r^{\floor{d/2}}}}\bigr\rceil\bigr)$
    points of $\PS_0$, while keeping the number of subcells
    $O(r^{\floor{d/2}})$.  It follows that
    \begin{equation*}
        I(\PS_0, \BS)
        \leq%
        O(r^{\floor{d/2}})\cdot
        I\Bigl(\Bigl\lceil{\frac{n}{r^{\floor{d/2}}}}\Bigr\rceil,
        \frac{m}{r} \Bigr).
    \end{equation*}

    Thus, for any $r\le m/(2k)$,
    \begin{equation}
        I(n,m)
        \ \leq\
        O(r^{\floor{d/2}})\cdot
        I\Bigl(\Bigl\lceil{\frac{n}{r^{\floor{d/2}}}}\Bigr\rceil,
        \frac{m}{r} \Bigr) +
        O\bigl(k mr^{\floor{d/2}-1}\bigr).
        \eqlab{main}
    \end{equation}

    We first set $r=m/(2k)$ in \Eqref{main} and use the naive bound
    $I(n',2k) \leq n' \cdot 2k$ to obtain
    \begin{equation*}
        I(n,m) \leq O\bigl(kn + m^{\floor{d/2}}/k^{\floor{d/2}-2}\bigr).
    \end{equation*}
    In particular, $I(n,m)=O(kn)$ if
    $n=\Omega(m^{\floor{d/2}}/k^{\floor{d/2}-1})$.  By duality,
    $I(n,m)=I(m,n)=O(km)$ if
    $m=\Omega(n^{\floor{d/2}}/k^{\floor{d/2}-1})$.

    We next set $r$ in \Eqref{main} so that
    $m/r = (n/r)^{\floor{d/2}}/k^{\floor{d/2}-1}$. Namely,
    \begin{equation*}
        r =
        \Bigl(n^{\floor{d/2}}/(k^{\floor{d/2}-1}m)\Bigr)^{1/(\floor{d/2}^2-1)}.
    \end{equation*}
    Assuming $n\ll m^{\floor{d/2}}/k^{\floor{d/2}-1}$ and
    $m\gg n^{\floor{d/2}}/k^{\floor{d/2}-1}$, we indeed have
    $1\ll r\ll m/k$.  We then obtain
    \begin{math}
        I(n,m)\ =
        O\bigl(k^{2/(\floor{d/2}+1)}(mn)^{\floor{d/2}/(\floor{d/2}+1)}\bigr).
    \end{math}
\end{proof}

\begin{remark}
    \remlab{5d}%
    For example, for halfspaces in dimension $d=5$, the above bound is
    $O\bigl(k^{2/3}m^{2/3}n^{2/3} + k(n+m)\bigr)$.  This bound is
    tight, at least for $d=5$ and for constant $k\ge 2$, since
    $\Omega(m^{2/3}n^{2/3} + n+m)$ is a lower bound.  This follows
    from a known reduction of 2D point-line incidences to 5D
    point-halfspace incidences~\cite{Erickson95,Erickson96}: Take a
    set $P$ of $n$ points and a set $L$ of $m$ lines in the plane with
    $\Omega(m^{2/3}n^{2/3} + n+m)$ incidences~\cite{m-ldg-02}.  The
    incidence graph avoids $K_{2,2}$, and thus $K_{k,k}$ for any
    $k\ge 2$.  Observe that for a sufficiently small constant
    $\eps>0$, a point $\bigl.(p_x,p_y)\in P$ lies on a line $y=ax+b$
    in $L$ $\iff$ $(p_y - ap_x - b)^2\le \eps$, $\iff$
    $a^2p_x^2 + p_y^2 - 2ap_xp_y + 2abp_x - 2bp_y + b^2\le \eps$,
    $\iff$ the point $(p_x^2,p_y^2,p_xp_y,p_x,p_y)\in\Re^5$ lies in
    the halfspace
    \begin{equation*}
        \Set{\smash{(\xi_1,\xi_2,\xi_3,\xi_4,\xi_5)}}{\smash{ a^2\xi_1
              + \xi_2 -2a\xi_3 + 2ab\xi_4 - 2b\xi_5 + b^2\le
              \eps}\bigr.}.
    \end{equation*}
    Thus, we obtain $n$ points and $m$ halfspaces in $\Re^5$ with the
    same incidence graph.
\end{remark}

\subsubsection{Halfspaces for \TPDF{$d\le 3$}{d >=3 } \SODAVer{.}}%
\seclab{halfspaces:3d}

For $d\le 3$, the approach in \secref{halfspaces:higher:d} generates
an extra logarithmic factor (since $I(\PS_i,\BS)=O(km)$ for each $i$
and so $\sum_{i>0}I(\PS_i,\BS)=O(km\log m)$).  We propose a different
recursive strategy which eliminates the extra factor and achieves a
linear bound.

\begin{lemma}
    \lemlab{s:c:halfspaces:3d}%
    Let $\PS$ be a set of $n$ points, and let $\BS$ be a set of $m$
    halfspaces, both in $\Re^d$ for $d\le 3$. If the graph
    $\G(\PS,\BS)$ does not contain $K_{k,k}$, then the maximum number
    of incidences between $\PS$ and $\BS$ is $O\pth{ k (n+m) }$.
\end{lemma}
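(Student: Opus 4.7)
The plan is to prove $I(\PS,\BS) \le Ck(n+m)$ by strong induction on $n+m$, using a pair of complementary reductions that are symmetric under duality. Base cases $n \le O(k)$ or $m \le O(k)$ follow from the trivial bound $nm$. Without loss of generality every halfspace in $\BS$ is upper (else split and add the two incidence bounds). The approach of \lemref{s:c:halfspaces} fails to yield a linear bound in dimensions $d \le 3$ because, with $\floor{d/2}-1 = 0$, the contribution of the deep bands telescopes to $O(km\log r)$ rather than $O(km)$. The remedy is to establish a recurrence of the form $T(n,m) \le T(n,\alpha m) + O(k(n+m))$ and, via duality, the symmetric one $T(n,m) \le T(\alpha n, m) + O(k(n+m))$ for some constant $\alpha<1$; alternating the two reductions telescopes the costs as a geometric series summing to $O(k(n+m))$.

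For the $m$-reducing step, apply \lemref{level} with $r = 2$: at most $O(k)$ points lie at level greater than $m/2$, contributing $O(km)$ incidences directly. Cover the remaining shallow points by an $(m/2)$-shallow $(1/2)$-cutting consisting of $O(1)$ simplices; each simplex has at most $m/2$ bounding hyperplanes intersecting it and at most $O(m)$ hyperplanes lying fully below it. Assign each shallow point to a single simplex $\nabla$, and observe that every halfspace whose bounding hyperplane is fully below $\nabla$ is incident to every point of $\PS_\nabla$, so by $K_{k,k}$-freeness their contribution summed over all simplices is at most $k\sum_\nabla (|\PS_\nabla| + |L_\nabla|) = O(k(n+m))$. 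The remaining incidences in each simplex involve only the $\le m/2$ halfspaces whose bounding hyperplanes intersect $\nabla$, which forms a subproblem to which the induction hypothesis applies with $m$ reduced by the factor $\alpha = 1/2$.

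The $n$-reducing step is obtained by applying the same argument in the dual arrangement of $\Dual{H}$ and $\Dual{\PS}$ (see \defref{duality} and \factref{reverse}), in which ``deep'' dual points correspond to halfspaces containing more than $n/2$ points of $\PS$; there are at most $O(k)$ such heavy halfspaces, contributing $O(kn)$. Combining a primal and a dual round, the original $(n,m)$-instance is replaced by subinstances whose parameters have shrunk by a constant factor in both coordinates, at additive cost $O(k(n+m))$, and the telescoping geometric series yields the desired linear bound. The main technical obstacle is verifying that the primal cutting and its dual can be composed so that the recursion truly contracts in both variables simultaneously: the one-sided primal recurrence used in \lemref{s:c:halfspaces} fails in low dimensions precisely because its $O(1)$ subproblems, each with $m$ halved, collectively still have total size $n+m$; the dual step is exactly what supplies the missing contraction in $n$.
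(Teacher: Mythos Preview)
Your high-level strategy---shallow cuttings plus duality, alternating a primal and a dual reduction so that the additive costs telescope---is exactly the paper's. But the primal step you describe does not yield the single-call recurrence $T(n,m)\le T(n,\alpha m)+O(k(n+m))$ that you claim.

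Your step produces $c$ subproblems $(n_\nabla, m/2)$, one per cell of the $(m/2)$-shallow $(1/2)$-cutting, with $\sum_\nabla n_\nabla \le n$. The shallow-cutting bound only guarantees $c=O(1)$, not $c<2$. Summing the inductive bound $Ck(n_\nabla+m/2)$ over the $c$ cells gives $Ck(n+cm/2)$, which for $c\ge 2$ is at least $Ck(n+m)$: there is no slack to absorb the additive $O(k(n+m))$. You correctly flag this (``collectively still have total size $n+m$''), but the dual step has the identical defect: it yields $c'$ subproblems each with $n$ halved and the halfspaces partitioned, so the total $n$-weight becomes $(c'/2)n$. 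Composing one primal and one dual round leaves total size $\le (c'/2)n+(c/2)m$, which does not shrink when $c,c'\ge 2$, so the alternation does not telescope. The final paragraph asserts that the dual step ``supplies the missing contraction in $n$,'' but it supplies contraction in $n$ only at the cost of the same blow-up in the number of subproblems that the primal step caused for $m$.

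The paper's fix is to arrange a genuinely \emph{single} recursive call. It chooses $r$ to be a large constant multiple of $m/k$, so that ``shallow'' means depth $\le m/r=O(k)$; the shallow points then contribute at most $O(kn)$ incidences by the trivial depth bound---no cutting and no recursion on them at all. The deep points, by summing \lemref{level} over dyadic bands, number at most $O(kr)\le m/2$, and one recurses on \emph{all of them together} against all $m$ halfspaces: $I(n,m)\le I(m/2,m)+O(kn)$. Dually $I(n,m)\le I(n,n/2)+O(km)$. Alternating these single-call recurrences gives $I(m,m)\le I(m/2,m/2)+O(km)$, which telescopes to $O(km)$, hence $I(n,m)=O(k(n+m))$.
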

\begin{proof}
    We modify the proof of \lemref{s:c:halfspaces}, for the case $d=3$
    (as the bound for $d=2$ is readily implied by the 3d bound).
    Recall that, for $i > 0$, $p \in \PS_i$ if it is between the
    $2^{i-1}m/r$-level and the $2^{i}m/r$-level of the planes of
    $\BS$, and $\cardin{\PS_i} = O(k r/2^{i})$.  Thus,
    $\cardin{ \bigcup_{i>0} \PS_i } = O(kr)$, which can be made less
    than $m/2$ by setting $r=m/(ck)$ for a sufficiently large
    constant~$c$.  So,
    $I\bigl( \bigcup_{i>0} \PS_i, \BS\bigr) \le I(m/2,m)$.  For $i=0$,
    we use the trivial upper bound
    $I(\PS_0,\BS) \le O(\cardin{\PS_0} \cdot m/r) = O(kn)$.  It
    follows that
    \begin{equation*}
        I(n,m)\ \le\ I(m/2,m) + O(kn).
    \end{equation*}

    By duality, we also obtain $I(n,m)\le I(n,n/2) + O(km)$.  By
    alternating between these two recurrences, we have
    $I(m,m) \le I(m/2,m/2) + O(km)$, implying $I(m,m)=O(km)$ and
    $I(n,m) = O(k(n+m))$.
\end{proof}

\begin{remark}
    The above recursion, which uses duality to obtain a linear bound,
    is interesting.  A loosely similar recursion appeared in a recent
    paper~\cite{Chan21} on an algorithmic problem related to 2D
    pseudo-halfplane range reporting.
\end{remark}

\subsection{Disks and balls}

The mapping of a point $\p \in \Re^d$ to the point
$(\p, \normX{p}^2 )\in \Re^{d+1}$ can be interpreted as mapping balls
into halfspaces \cite{bcko-cgaa-08}. Applying
\lemref{s:c:halfspaces:3d} and \lemref{s:c:halfspaces} in $d+1$
dimensions gives the following.

\begin{corollary}
    \corlab{balls}%
    (I) Let $\PS$ be a set $n$ points, and $\BS$ be a set of $m$
    disks, both in the plane. If the graph $G(\PS,\BS)$ does not
    contain $K_{k,k}$, then $I(\PS,\BS) = O(k(n+m))$.

    \medskip%
    \noindent%
    (II) Let $\PS$ be a set $n$ points in $\Re^d$, and $\BS$ be a set
    of $m$ balls in $\Re^d$, for $d>2$. If the graph $G(\PS,\BS)$ does
    not contain $K_{k,k}$, then
    \begin{math}
        I(\PS,\BS) =%
        O\bigl( k^{2/(\ceil{d/2}+1)}(mn)^{\ceil{d/2}/(\ceil{d/2}+1)} +
        k (n+m)\bigr). \Bigr.
    \end{math}
\end{corollary}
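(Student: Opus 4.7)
\medskip
\noindent\textbf{Proof proposal.}
The plan is to realize both parts as direct consequences of the halfspace bounds (\lemref{s:c:halfspaces:3d} and \lemref{s:c:halfspaces}) via the standard lifting paraboloid. Define the lifting $\phi: \Re^d \to \Re^{d+1}$ by $\phi(\p) = (\p, \normX{\p}^2)$. For a ball $\obj \subseteq \Re^d$ with center $c$ and radius $\rho$, the defining inequality $\normX{x-c}^2 \le \rho^2$ expands to $\normX{x}^2 \le 2 c \cdot x - \normX{c}^2 + \rho^2$, so $\p \in \obj$ $\iff$ $\phi(\p)$ lies in the lower halfspace $h_\obj \subseteq \Re^{d+1}$ bounded by the hyperplane $x_{d+1} = 2c \cdot (x_1,\dots,x_d) - \normX{c}^2 + \rho^2$. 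Thus setting $\PS' = \phi(\PS) \subseteq \Re^{d+1}$ and $\BS' = \{h_\obj : \obj \in \BS\}$, the bipartite incidence graphs $\G(\PS,\BS)$ and $\G(\PS',\BS')$ are identical, so in particular $\G(\PS',\BS')$ is $K_{k,k}$-free and $I(\PS,\BS) = I(\PS',\BS')$.

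For part (I), disks in $\Re^2$ map to halfspaces in $\Re^3$, and \lemref{s:c:halfspaces:3d} immediately yields $I(\PS',\BS') = O(k(n+m))$. For part (II), balls in $\Re^d$ with $d>2$ map to halfspaces in $\Re^{d+1}$ with $d+1 > 3$, so \lemref{s:c:halfspaces} gives
\begin{equation*}
    I(\PS',\BS')
    \;=\;
    O\bigl( k^{2/(\floor{(d+1)/2}+1)} (mn)^{\floor{(d+1)/2}/(\floor{(d+1)/2}+1)} + k(n+m)\bigr).
\end{equation*}
Using the identity $\floor{(d+1)/2} = \ceil{d/2}$ converts this to the stated bound.

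The only genuine thing to verify is that $\phi$ preserves the incidence relation (which is the content of the lifting calculation above) and, as a consequence, preserves both the edge set of $\G$ and the $K_{k,k}$-free property; there is no real obstacle beyond this routine computation, since $\phi$ is injective so no collapsing of points occurs. Once the reduction is in place, parts (I) and (II) are simply the $d' = 3$ and $d' \ge 4$ instances of the halfspace incidence bounds proved in the previous subsection.
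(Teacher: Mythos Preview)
Your proof is correct and follows the same approach as the paper: lift points via $\p \mapsto (\p, \normX{\p}^2)$ so that balls become halfspaces in $\Re^{d+1}$, then invoke \lemref{s:c:halfspaces:3d} for $d=2$ and \lemref{s:c:halfspaces} for $d>2$. You have simply made explicit the incidence-preserving calculation and the identity $\floor{(d+1)/2}=\ceil{d/2}$ that the paper leaves implicit.
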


\section{Back to rectangles and boxes}\seclab{boxes:improved}

In this section, we return to the case of rectangles and boxes,
and apply our method for 3D halfspaces from \secref{halfspaces:3d} to
slightly improve the bounds in \secref{boxes}.

First, we can obtain a linear bound for the case of 3D orthants, by
modifying the proof using known versions of shallow cuttings for
orthants~\cite{AfshaniT18}, or by directly mapping from 3D orthants to
3D halfspaces~\cite{ChanLP11,PachT11}: Without loss of generality,
assume that the orthants are of the form
$(-\infty,q_x]\times (-\infty,q_y]\times (-\infty,q_z]$ (since we can
treat the other 7 types of orthants separately in a similar way, and
add the bounds).  Assume that all coordinates of the points and
orthants are all integers (for example, by replacing coordinate values
with ranks).  Then the point $(p_x,p_y,p_z)$ is in the orthant
$(-\infty,q_x]\times (-\infty,q_y]\times (-\infty,q_z]$ iff the point
$(4^{p_x},4^{p_y},4^{p_z})$ is in the halfspace
$x/4^{q_x} + y/4^{q_y} + z/4^{q_z}\le 3$.

\begin{corollary}
    \corlab{orthants}%
    Let $\PS$ be a set $n$ points, and $\BS$ be a set of $m$ orthants,
    both in $\Re^3$. If the graph $G(\PS,\BS)$ does not contain
    $K_{k,k}$, then $I(\PS,\BS) = O\bigl(k(n+m)\bigr)$.
\end{corollary}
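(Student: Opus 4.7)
The plan is to reduce 3D orthant incidences to 3D halfspace incidences and then invoke \lemref{s:c:halfspaces:3d}. First, every 3D orthant is of one of eight sign types, depending on the direction in which each of its three bounding half-lines extends. I would partition $\BS$ into eight subfamilies $\BS_1,\ldots,\BS_8$ by type, prove the claimed bound separately for each, and add; each $\G(\PS,\BS_i)$ is a subgraph of $\G(\PS,\BS)$, hence still $K_{k,k}$-free. After reflecting coordinates if necessary, it therefore suffices to treat the case in which every orthant in $\BS$ has the form $(-\infty,q_x]\times(-\infty,q_y]\times(-\infty,q_z]$.

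Next, since only the relative order of coordinates matters for incidence, I would replace each coordinate of every point and every orthant vertex by its rank among the corresponding coordinates, so that all coordinates become positive integers; this is an order-preserving change and does not alter $\G(\PS,\BS)$. Define the lifting $\varphi(p_x,p_y,p_z)=(4^{p_x},4^{p_y},4^{p_z})$ and, for each orthant with vertex $(q_x,q_y,q_z)$, the halfspace
\begin{equation*}
   h_{(q_x,q_y,q_z)} \;=\; \Set{(x,y,z)\in\Re^3}{x/4^{q_x}+y/4^{q_y}+z/4^{q_z}\le 3}.
\end{equation*}
Let $\PS'=\varphi(\PS)$ and let $\BS'$ be the resulting set of halfspaces.

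The key verification is that $\varphi(p)\in h_{(q_x,q_y,q_z)}$ if and only if $p\in(-\infty,q_x]\times(-\infty,q_y]\times(-\infty,q_z]$. This reduces to showing that $4^{p_x-q_x}+4^{p_y-q_y}+4^{p_z-q_z}\le 3$ exactly when $p_x\le q_x$, $p_y\le q_y$, and $p_z\le q_z$. If all three differences are nonpositive, then each summand is at most $1$ and the sum is at most $3$. Conversely, if, say, $p_x>q_x$, then since the differences are integers the first summand is already at least $4$, and the inequality fails. Consequently $\G(\PS,\BS)\cong \G(\PS',\BS')$, so $\G(\PS',\BS')$ is $K_{k,k}$-free, and \lemref{s:c:halfspaces:3d} delivers $I(\PS',\BS')=O(k(n+m))$. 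Summing over the eight orthant types completes the proof.

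The proof is essentially a direct application of the hint already sketched in the paragraph preceding the corollary, and the only real step of substance is the verification of the lifting inequality. The mild technical obstacle is ensuring integrality of coordinates, and this is precisely what the rank-reduction step buys: it forces any violation of componentwise order to create a jump of at least a factor of $4$ in the corresponding exponential, which then exceeds the combined contribution from the other two terms.
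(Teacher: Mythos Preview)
Your proposal is correct and follows essentially the same approach as the paper: the paragraph preceding the corollary already sketches exactly this reduction (split into the eight orthant types, replace coordinates by ranks to force integrality, then use the map $(p_x,p_y,p_z)\mapsto(4^{p_x},4^{p_y},4^{p_z})$ and the halfspace $x/4^{q_x}+y/4^{q_y}+z/4^{q_z}\le 3$), and you have simply supplied the routine verification of the equivalence before invoking \lemref{s:c:halfspaces:3d}.
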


We can also obtain a linear bound for the case of 2D 3-sided
(axis-aligned) rectangles, by modifying the proof using known
(simpler) versions of shallow cuttings for 2D 3-sided
rectangles~\cite{JorgensenL11}, or by directly mapping from 2D 3-sided
rectangles to 3D orthants: Without loss of generality, assume that the
3-sided rectangles are of the form $[a,b]\times (-\infty,h]$ (since we
can treat the other 3 types of 3-sided rectangles separately in a
similar way, and add the bounds).  Then the point $(p_x,p_y)$ is in
the 3-sided rectangle $[a,b]\times (-\infty,h]$ iff the point
$(-p_x,p_x,p_y)$ is in the orthant
$(-\infty,-a]\times (-\infty,b]\times(-\infty,h]$.

\begin{corollary}
    \corlab{3sided}%
    Let $\PS$ be a set $n$ points, and $\BS$ be a set of $m$ 3-sided
    rectangles, both in $\Re^2$. If the graph $G(\PS,\BS)$ does not
    contain $K_{k,k}$, then $I(\PS,\BS) = O\bigl(k(n+m)\bigr)$.
\end{corollary}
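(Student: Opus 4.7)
The plan is to derive this as a direct corollary of \corref{orthants} via the reduction sketched in the surrounding paragraph of the excerpt. First I would perform a constant-size case split on the orientation of the unbounded side: there are only four types of axis-aligned 3-sided rectangles in the plane, and since the resulting bounds add, it suffices to handle rectangles of the canonical form $[a,b]\times (-\infty,h]$ and then multiply by $4$.

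Next I would apply the lifting $(p_x,p_y)\mapsto (-p_x,p_x,p_y)\in\Re^3$ to each point of $\PS$, and associate to each rectangle $[a,b]\times(-\infty,h]\in\BS$ the 3D orthant $(-\infty,-a]\times(-\infty,b]\times(-\infty,h]$. A one-line verification shows that the three defining inequalities $-p_x\le -a$, $p_x\le b$, $p_y\le h$ are equivalent to $a\le p_x\le b$ and $p_y\le h$, so $(p_x,p_y)$ lies in the rectangle iff the lifted point lies in the orthant. Thus the bipartite incidence graph is preserved verbatim by the reduction, and in particular the lifted instance is still $K_{k,k}$-free.

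Finally I would invoke \corref{orthants} on the $n$ lifted points and $m$ orthants in $\Re^3$ to conclude $I(\PS,\BS)=O\bigl(k(n+m)\bigr)$, and summing over the four orientation classes preserves the same asymptotic bound. There is essentially no obstacle here: the mapping is lossless, and the real combinatorial effort was already absorbed into \lemref{s:c:halfspaces:3d} and its transfer to orthants in \corref{orthants}. The only mild subtlety worth flagging is that one must check the mapping preserves incidences in both directions (which the inequalities above do), and that the case split covers all four orientations; both are immediate from the construction.
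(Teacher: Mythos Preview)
Your proposal is correct and follows exactly the paper's own argument: the paragraph preceding the corollary gives precisely this reduction, lifting $(p_x,p_y)\mapsto(-p_x,p_x,p_y)$ and sending $[a,b]\times(-\infty,h]$ to the orthant $(-\infty,-a]\times(-\infty,b]\times(-\infty,h]$, after the same case split over the four orientations, and then invoking \corref{orthants}. There is nothing to add.
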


\begin{remark}
    Alternatively, \corref{3sided} can be derived from the following
    known result by Fox and Pach~\cite{FoxP08}, and reproved by
    Mustafa and Pach~\cite{MustafaP16}: if the intersection graph of
    $N$ line segments in the plane does not contain $K_{k,k}$ for a
    constant $k$, then the graph has $O(N)$ edges.  To see the
    connection, assume that the 3-sided rectangles in $\BS$ are
    unbounded from below.  For each 3-sided rectangle in $\BS$, we
    take the horizontal line segment bounding its top side, and for
    each point in $\PS$, we take the vertical upward ray from the
    point.  The resulting intersection graph with $N=n+m$ corresponds
    to the incidence graph (since there are no crossings between two
    horizontal segments, nor between two vertical rays).  The
    dependence on $k$ was not explicitly analyzed in the previous
    papers, but also appears to be $O(k)$.
\end{remark}

Using this result for 2D 3-sided (axis-aligned) rectangles, we obtain
an improved result for 2D general 4-sided (axis-aligned) rectangles,
by using a divide-and-conquer with a nonconstant branching factor $b$.
Although the improvement is a small $\log\log n$ factor, the bound is
tight for $m=n$ and constant $k$, as it matches Basit \etal
\cite{bcstt-zpsh-21} lower bound $\Omega(n\log n/\log\log n)$.

\begin{lemma}
    \lemlab{rect:improved}%
    Let $\PS$ be a set of $n$ points in $\Re^2$, and let $\BS$ be a
    set of $m$ rectangles in $\Re^2$. If the graph $\G(\PS,\BS)$ does
    not contain $K_{k,k}$, then the maximum number of incidences
    between $\PS$ and $\BS$ is
    $I(n,m)=O\left( k n\log n/\log\log n + km\log^\eps n\right)$ for
    any constant $\eps>0$.
\end{lemma}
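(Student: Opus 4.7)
The plan is to adapt the range-tree strategy from \lemref{easy} but with a nonconstant branching factor $b$ chosen roughly as $b=\log^\eps n$, using \corref{3sided} for the 3-sided sub-problems that arise on the ``endpoint'' child-slabs and \lemref{intervals} for the 1D interval sub-problems that arise on the ``inner'' child-slabs. I would sort the points of $\PS$ by $x$-coordinate and partition the plane by vertical lines into $b$ slabs $S_1,\ldots,S_b$, each containing about $n/b$ points. Each rectangle $\rect=[x_1,x_2]\times[y_1,y_2]\in\BS$ is then classified at the current level: if $[x_1,x_2]$ lies inside a single slab, $\rect$ is propagated to that slab and handled recursively; otherwise, $\rect$ crosses a contiguous block of slabs $S_i,\ldots,S_j$ with $i<j$, and is \emph{settled} at the current node.

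For a settled rectangle $\rect$, its incidences with $\PS$ break into three kinds of sub-queries, one per relevant slab. In the left endpoint slab $S_i$, the incidences of $\rect$ with $S_i\cap\PS$ coincide with those of the 3-sided rectangle $[x_1,+\infty)\times[y_1,y_2]$; similarly for the right endpoint slab $S_j$; and in each inner slab $S_\ell$ with $i<\ell<j$, the incidences coincide with those of the horizontal strip $\Re\times[y_1,y_2]$, which is a 1D $y$-interval query on $S_\ell\cap\PS$. Because on each slab separately the modified shape has exactly the same incidences with that slab's points as $\rect$, the induced incidence sub-graph is a sub-graph of $\G(\PS,\BS)$ and therefore remains $K_{k,k}$-free. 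I can then apply \corref{3sided} to the 3-sided sub-problem on each of the two endpoint slabs and \lemref{intervals} to the 1D sub-problem on each inner slab. Summing the $O(k(n_{S_c}+m_{S_c}^{\text{ep}}))$ endpoint bounds over all slabs gives $O(k(n+m_{\text{settled}}))$ (each settled rectangle contributes to two endpoint slabs), while summing the $O(k(n_{S_c}+m_{S_c}^{\text{inner}}))$ inner bounds gives $O(k(n+b\,m_{\text{settled}}))$ (each settled rectangle contributes to at most $b-2$ inner slabs). So the total cost at the current level is $O(k(n+b\,m_{\text{settled}}))$.

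Since each rectangle is settled at exactly one level (the deepest node whose children its $x$-interval crosses), summing $m_{\text{settled}}$ over the levels gives at most $m$; and the recursion depth is $\log_b n = \log n/(\eps\log\log n)$. This yields
\[
I(n,m) \ \leq\ O\bigl(k n\log_b n + k\,m\,b\bigr) \ =\ O\bigl(k n\log n/\log\log n + k m\log^\eps n\bigr),
\]
as claimed. The main subtlety I expect is the treatment of inner slabs: a naive attempt to bundle the inner contribution of a settled rectangle into one global 3-sided query (which would eliminate the $b$ factor and give a cleaner $km$ bound) does not work, because the extended 3-sided rectangle contains extra points lying outside the slab, and the extended incidence graph can easily contain $K_{k,k}$ even when the original does not. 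Applying \lemref{intervals} per inner slab is safe precisely because restricting to a single slab's points can only delete edges from the incidence graph; this slab-by-slab safety, at the unavoidable cost of a $b$ factor on $m$, is exactly what is absorbed into the final $\log^\eps n$ term.
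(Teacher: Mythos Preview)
Your proposal is correct and follows essentially the same approach as the paper: divide into $b=\log^\eps n$ vertical slabs, recurse on rectangles contained in a single slab, and handle crossing rectangles via \corref{3sided}, yielding $I(n,m)=O(kn\log_b n + bkm)$. The only cosmetic differences are that the paper phrases this as a recurrence rather than an explicit tree, and treats \emph{all} crossing rectangles uniformly with \corref{3sided} (since a horizontal strip is a degenerate 3-sided rectangle) instead of splitting off the inner slabs and invoking \lemref{intervals} separately---your distinction is unnecessary but harmless.
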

\begin{proof}
    Divide the plane into $b$ vertical slabs
    $\sigma_1,\ldots,\sigma_b$, each containing $n/b$ points of $\PS$.
    For each $i=1,\ldots,b$, let $\PS_i=P\cap\sigma_i$, let $\BS_i$ be
    the set of all rectangles of $\BS$ that are completely inside
    $\sigma_i$, and let $\BS'_i$ be the set of all rectangles of $\BS$
    that intersect $\sigma_i$ but are not completely inside
    $\sigma_i$.  Inside $\sigma_i$, the rectangles in $\BS'_i$ may be
    viewed as 3-sided, and so
    \begin{math}
        I(\PS_i,\BS'_i)%
        =%
        O \bigl( k(\cardin{\PS_i} + \cardin{\BS'_i} ) \bigr)
    \end{math}
    by \corref{3sided}.  Thus,
    $\sum_{i=1}^b I(\PS_i,\BS'_i)=O\bigl(k(n+bm_0)\bigr)$ where
    $m_0 = \cardin{\bigcup_{i=1}^b \BS'_i }$.  We also have
    $\sum_{i=1}^b I(\PS_i,\BS_i) \leq \sum_{i=1}^b I(n/b,m_i)$ where
    $m_i=\cardin{\BS_i}$.  We obtain the recurrence
    \begin{equation*}
        I(n,m)\ \leq%
        \max_{m_0,\ldots,m_b:\ m_0+\cdots+m_b\le m}
        \Bigl({\sum_{i=1}^b} I(n/b,m_i) + O(kn+bkm_0)\Bigr),
    \end{equation*}
    which solves to $I(n,m)\le O(kn\log_b n + bkm)$.  In particular,
    setting $b=\log^\eps n$ gives the result.
\end{proof}

The improvement in 2D also implies an improvement for boxes in higher dimensions.  Again, we can use a divide-and-conquer with a larger
branching factor $b$.

\begin{theorem}
    \thmlab{box:improved}%
    Let $\PS$ be a set of $n$ points in $\Re^d$, and let $\BS$ be a
    set of $m$ boxes in $\Re^d$. If the graph $\G(\PS,\BS)$ does not
    contain $K_{k,k}$, then the maximum number of incidences between
    $\PS$ and $\BS$ is
    $O\bigl( k n (\log n/\log\log n)^{d-1} + km\log^{d-2+\eps} n\bigr)$.
\end{theorem}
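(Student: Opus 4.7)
The plan is to induct on $d$, using \lemref{rect:improved} as the base case $d=2$ and extending the non-constant-branching divide-and-conquer scheme to higher dimensions. First I would partition $\Re^d$ into $b$ vertical slabs $\sigma_1,\ldots,\sigma_b$ along the first coordinate axis, each containing $n/b$ points of $\PS$. Writing $\PS_i=\PS\cap\sigma_i$, every box $\rect\in\BS$ that meets $\sigma_i$ is classified as either \emph{inside} (its $x_1$-interval is contained in $\sigma_i$, joining $\BS_i$) or \emph{crossing} (otherwise, joining $\BS'_i$).

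The key reduction is that for any crossing box, the $x_1$-coordinate constraint is automatically satisfied by every point of $\PS_i$, so projecting away the $x_1$-coordinate — exactly as in the reduction used inside \lemref{easy} — turns the incidences between $\PS_i$ and $\BS'_i$ into a $(d-1)$-dimensional instance on $n/b$ points and $\cardin{\BS'_i}$ boxes. Since $K_{k,k}$-freeness is inherited by subgraphs, I apply the induction hypothesis in dimension $d-1$ to each of these subinstances. Summing over the $b$ slabs and using $\sum_i\cardin{\BS'_i}\le bm$ (each crossing box meets at most $b$ slabs), the per-level crossing contribution is
$O\pth{kn(\log n/\log\log n)^{d-2}+kbm\log^{d-3+\eps'}n}$,
where $\eps'$ denotes the exponent used in the $(d-1)$-dimensional induction hypothesis. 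The inside boxes yield the $d$-dimensional recurrence
\begin{equation*}
   I_d(n,m)\ \le\ \sum_{i=1}^b I_d(n/b,m_i)+O\pth{kn(\log n/\log\log n)^{d-2}+kbm\log^{d-3+\eps'}n},
\end{equation*}
with $\sum_i m_i\le m$, where at every level of the recursion tree the total point count is preserved at $n$ and the total inside-box count remains at most $m$.

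Unrolling gives a recursion tree of depth $\log_b n$, so the overall bound is $\log_b n$ times the per-level cost. Choosing $b=\log^{\eps/2}n$ yields $\log_b n=\Theta(\log n/\log\log n)$; inducting with exponent $\eps'=\eps/2$ in dimension $d-1$ then produces $\eps/2+\eps/2=\eps$ in the second term and $(d-2)+1=d-1$ copies of $\log n/\log\log n$ in the first, matching the target bound (halving $\eps$ through $d-2$ inductive levels still leaves a positive constant since $d$ is fixed). The main obstacle I anticipate is the bookkeeping across recursion levels — making sure that each crossing box is charged at precisely the first level at which it becomes a crossing box, so the inside-box count is not inflated down the tree, and balancing the branching factor $b$ against the inductive exponent carefully so the log exponents land at exactly $d-1$ and $d-2+\eps$ rather than something slightly larger.
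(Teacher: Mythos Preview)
Your overall scheme---slab decomposition with branching factor $b$, recursing in dimension $d$ on short boxes and applying the $(d-1)$-dimensional bound on long boxes---is the same as the paper's. But your ``key reduction'' is stated incorrectly, and this is a genuine gap.

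You define a box as \emph{crossing} if it meets $\sigma_i$ but its $x_1$-interval is not contained in $\sigma_i$, and then claim that for any such box the $x_1$-constraint is automatically satisfied by every point of $\PS_i$. This is false: a box whose $x_1$-interval has exactly one endpoint inside $\sigma_i$ is crossing by your definition, yet points of $\PS_i$ on the wrong side of that endpoint fail the $x_1$-constraint. Only boxes that \emph{fully} cut across $\sigma_i$ (neither $x_1$-endpoint inside) project cleanly to $(d-1)$ dimensions. The analogy with \lemref{easy} is misleading here, since the dyadic decomposition there guarantees full crossing at every dyadic block; the analogy with \lemref{rect:improved} is also imperfect, since in $2$D the partially-crossing rectangles become $3$-sided (not $1$-dimensional), and it is \corref{3sided} that handles them.

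The fix is to put every box with an $x_1$-endpoint in $\sigma_i$ into the recursive $d$-dimensional call, and reserve the $(d-1)$-dimensional induction for genuinely long boxes. This costs you the clean inequality $\sum_i m_i\le m$: now $\sum_i m_i\le 2m$. The crucial observation---and the point your ``bookkeeping'' paragraph does not quite reach---is that this factor of $2$ does \emph{not} compound down the recursion tree: at every level the total number of (box, slab) pairs with an endpoint inside is bounded by the number of $x_1$-endpoints, which is always $2m$. The paper makes this transparent by parametrizing the recurrence by the vertex count $v$ rather than the box count $m$; since each vertex lies in exactly one sub-slab, $\sum_i v_i = v$ holds exactly at every level, and the recurrence closes cleanly.
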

\begin{proof}
    Assume that $\PS$ is a set of $n$ points inside a vertical slab
    $\sigma$, $\BS$ is a set of $m$ boxes each having at least one
    vertex in $\sigma$.  Let $v$ be the number of vertices of the
    boxes that are inside $\sigma$.  Then $v\le 2^d m$.  Let
    $I_d(n,v)$ denote the maximum number of point-box incidences in
    this setting.

    Divide $\sigma$ into $b$ vertical slabs
    $\sigma_1,\ldots,\sigma_b$, each containing $n/b$ points of $\PS$.
    For each $i=1,\ldots,b$, let $\PS_i=P\cap\sigma_i$, let $\BS_i$ be
    the set of all boxes of $\BS$ that have at least one vertex in
    $\sigma_i$, and let $\BS'_i$ be the set of all ``long'' boxes of
    $\BS$ that intersect $\sigma_i$ but are not in $\BS_i$ (i.e., that
    completely cut across $\sigma_i$).  Inside $\sigma_i$, the boxes
    in $\BS'_i$ are liftings of $(d-1)$-dimensional boxes, and so
    $I(\PS_i,\BS'_i)\le I_{d-1}(n/b,v)$.  We can recursively bound
    $\sum_{i=1}^b I(\PS_i,\BS_i)$ by $\sum_{i=1}^b I_d(n/b,v_i)$ where
    $v_i$ is the number of vertices in $\sigma_i$.  We obtain the
    recurrence
    \[
        I_d(n,v) \ \leq%
        \max_{v_1,\ldots,v_b:\ v_1+\cdots+v_b\le v}
        \Bigl(\smash{\sum_{i=1}^b} I(n/b,v_i) + bI_{d-1}(n/b,v)\Bigr),
    \]
    with the base case $I_2(n,v)=O(kn\log_b n + bkv)$ from
    \lemref{rect:improved}.  This solves to
    \begin{equation*}
        I_d(n,m)\le
        O\bigl( kn(\log_b n)^{d-1} + b^{d-1}kv(\log_b n)^{d-2} \bigr).
    \end{equation*}
    Setting $b=\log^{\eps/(d-1)} n$ gives the result.
\end{proof}

We immediately obtain a similar improvement of \lemref{s:polytopes} for
$\delta$-polyhedra.

\begin{remark}
    Similar ideas of using shallow cuttings for 3D orthants and 2D
    3-sided rectangles, reducing the 4-sided to the 3-sided case, and
    doing divide-and-conquer with $\log^\eps n$ branching factors can
    also be found in the algorithmic literature on orthogonal range
    searching.  The $\log n/\log\log n$ factors in
    \lemref{rect:improved} may look surprising at first, but it is not
    unprecedented in the range searching literature. For example,
    Chazelle~\cite{Chazelle90a} proved an $\Omega(n (\log n/\log\log n)^{d-1})$
    space lower bound for orthogonal range reporting in pointer machines,
    and in \apndref{lb}, we note that his argument actually
    implies a lower bound for our incidence problem.
\end{remark}

\begin{remark}
    It is possible to improve the bound to
    $O(n(\log n/\log\log n)^{d-1} + m\log^{d-3+\eps}n)$ for boxes in
    dimension $d\ge 3$ and for constant $k$.  The idea is to prove an
    $O(n\log_b n + bm\log\log n)$ bound for the 3D 5-sided case by
    employing a known recursive grid approach introduced by Alstrup
    \etal~\cite{AlstrupBR00} and adapted by Chan
    \etal~\cite{ChanNRT18}, using the bound for 3D orthants as the
    base case.  However, this does not imply any improvement for the
    main case $m=n$, and so we will omit the details.
\end{remark}

\section{When the union %
   complexity is low}
\seclab{union:complexity}

In this section, we consider the version of the incidence problem for
set $\PS$ of $n$ points and a set $\Family$ of $m$ well-behaved shapes
with near-linear union complexity.  We will adapt the shallow cutting
approach from \secref{shallow:cut}.

\subsection{Preliminaries} %

\begin{definition}
    \deflab{well:behaved}%
    Let $\Family$ be a set of $\nF$ shapes in $\Re^d$. The family
    $\Family$ is \emphi{well-behaved} if
    \begin{compactenumI}
        \smallskip%
        \item The union complexity for any subset of $\Family$ of size
        $r$ is (at most) $\FZero(r)$, for some function
        $\FZero(r)$ such that $\FZero(r)/r$ is nondecreasing.

        \smallskip%
        \item The total complexity of an arrangement of any $r$ shapes
        of $\Family$ is $O(r^d)$.

        \smallskip%
        \item For any set $X \subseteq \Re^d$, and any subset
        $\FamilyB \subseteq \Family$, one can perform a decomposition
        of the cells of the arrangement $\ArrX{\FamilyB}$ that
        intersects $X$, into cells of constant descriptive complexity
        (e.g., vertical trapezoids), and the complexity of this
        decomposition is proportional to the number of vertices of the
        cells of $\ArrX{\FamilyB}$ that intersects $X$.

        \smallskip%
        \item The arrangement of any $d$ shapes of $\Family$ has
        constant complexity.
    \end{compactenumI}
\end{definition}

The following minor variant of \Matousek's shallow cuttings
\cite{m-rph-92} is from Chekuri \etal \cite{cch-smcpg-12}.

\begin{theorem}
    \thmlab{shallow:cutting}%
    Given a set $\Family$ of $\nF$
    \defrefY{well-behaved}{well:behaved} shapes in $\Re^d$, and
    parameters $r$ and $k$, one can compute a decomposition $\Cutting$
    of space into $O(r^d)$ cells of constant descriptive complexity,
    such that total weight of boundaries of shapes of $\Family$
    intersecting a single cell is at most $\nF/r$. The decomposition
    $\Cutting$ is a \emphi{$1/r$-cutting} of $\Family$.  Furthermore,
    the total number of cells of $\Cutting$ containing points of depth
    smaller than $k$ is
    \begin{math}
        O\bigl( ({r k}/{\nF} + 1)^d\, \FZero({\nF}/{k}) \bigr)\Big..
    \end{math}
    Here, the \emphi{depth} of a point $p$ is the number of shapes in
    $\Family$ that contains $\p$.
\end{theorem}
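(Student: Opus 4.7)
The plan is to follow \Matousek's classical shallow cutting argument~\cite{m-rph-92}, replacing the specific geometric ingredients for halfspaces by the four abstract properties of \defref{well:behaved}. I would draw a random sample $R \subseteq \Family$ of size $\Theta(r)$ (enlarged by a small logarithmic factor if needed, in the style of Clarkson--Shor / Chazelle--Friedman). With positive probability the sample will satisfy two properties simultaneously: every cell of the arrangement $\ArrX{R}$ is crossed by at most $\nF/r$ shapes of $\Family$ (the standard $\eps$-net guarantee), and every point $p$ whose depth in $\Family$ is below $k$ lies in a cell of $\ArrX{R}$ of depth at most $O(rk/\nF + 1)$ in $R$ (a Chernoff-style bound, since the expected $R$-depth of such a point is $\leq rk/\nF$).

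Next, I would decompose each cell of $\ArrX{R}$ into constant descriptive complexity subcells using property~(III) (for example, vertical trapezoids), producing the cutting $\Cutting$. Property~(II) bounds the total arrangement complexity by $O(r^d)$, hence $|\Cutting| = O(r^d)$; property~(IV) guarantees that the individual pieces have constant complexity; and the first sampling guarantee gives the $1/r$-cutting bound on the number of crossing boundaries per cell. This handles the non-shallow half of the statement.

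For the shallow half, the second sampling guarantee shows that every cell of $\Cutting$ containing a point of depth $< k$ in $\Family$ sits inside some cell of $\ArrX{R}$ whose depth in $R$ is $t = O(rk/\nF + 1)$, because any shape of $R$ fully containing that enclosing cell also covers the shallow point. Thus the count of such cells is at most the decomposed complexity of the $(\leq t)$-level of $\ArrX{R}$, which by property~(III) is proportional to the number of vertices of $\ArrX{R}$ at levels $\leq t$. A Clarkson--Shor style argument, combined with the union complexity assumption (property~(I)), then yields $O\bigl(t^d\, \FZero(r/t)\bigr) = O\bigl((rk/\nF + 1)^d\, \FZero(\nF/k)\bigr)$.

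The main obstacle is establishing the level-complexity bound $O(t^d\, \FZero(r/t))$ from the union complexity $\FZero$ alone. The standard trick is to pick a secondary random subsample of $R$ with sampling rate $\Theta(1/t)$, whose expected union complexity is $O(\FZero(r/t))$ by property~(I); each vertex at level $\leq t$ in $\ArrX{R}$ then appears as a vertex of the subsample's union with probability $\Omega(1/t^d)$, and rearranging gives the bound. Combining the two required sampling events into a single good sample $R$ is routine via a union bound after inflating the sample size by a small polylogarithmic factor, as was carried out in detail by Chekuri \etal~\cite{cch-smcpg-12}.
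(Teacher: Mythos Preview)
The paper does not prove this theorem at all: it is stated as a known result, attributed to Chekuri \etal~\cite{cch-smcpg-12} as ``a minor variant of \Matousek's shallow cuttings~\cite{m-rph-92}.'' Your sketch is essentially the standard random-sampling / Clarkson--Shor argument underlying that reference (which you yourself cite at the end), so there is nothing to compare against here.
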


\begin{definition}
    A set $\Family$ of $\nF$ shapes has \emphi{primal shatter
       dimension} $b$ if for every point set $\PS$, we have
    $\cardin{\Set{\PS\cap \obj}{ \obj\in\Family}} =
    O(\cardin{\PS}^b)$.
\end{definition}

\begin{lemma}
    \lemlab{shatter}%
    Let $\PS$ be a set of $n$ points, and let $\Family$ be
    a set of $m$ shapes with primal shatter dimension~$b$.  If the graph
    $\G(\P,\Family)$ does not contain $K_{k,k}$, then the maximum
    number of incidences between $\P$ and $\Family$ is
    $I(n,m) = O( k(n^{b+1} + m))$.
\end{lemma}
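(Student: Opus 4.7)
The plan is to partition the shapes of $\Family$ by their \emph{trace} on $\PS$ and bound each class separately, using the $K_{k,k}$-freeness to limit the multiplicity of large traces and the primal shatter dimension to limit the number of distinct traces.

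Concretely, for each $\obj \in \Family$ let $\tau(\obj) = \PS \cap \obj$. By the definition of primal shatter dimension, the set of distinct traces $\Set{\tau(\obj)}{\obj \in \Family}$ has cardinality $O(n^b)$. I would split $\Family$ into $\Family_{\mathrm{big}} = \Set{\obj \in \Family}{\cardin{\tau(\obj)} \geq k}$ and $\Family_{\mathrm{small}} = \Family \setminus \Family_{\mathrm{big}}$, and bound the two incidence contributions independently.

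For $\Family_{\mathrm{small}}$ the bound is immediate: each shape contributes fewer than $k$ incidences, and there are at most $m$ such shapes, giving $O(km)$. For $\Family_{\mathrm{big}}$, the key observation is that for any fixed trace $A$ with $\cardin{A} \geq k$, at most $k-1$ shapes of $\Family$ can have $\tau(\obj) = A$: otherwise $k$ such shapes together with the $k$ points of $A$ would form a $K_{k,k}$ in $\G(\PS, \Family)$. Since each shape in this class contributes $\cardin{\tau(\obj)} \leq n$ incidences, and there are $O(n^b)$ distinct traces, the total contribution is $O((k-1) \cdot n \cdot n^b) = O(k n^{b+1})$. Adding the two contributions yields $I(\PS, \Family) = O(k(n^{b+1} + m))$, as claimed.

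There is no real obstacle here; the whole argument is a counting step once one observes that the primal shatter dimension controls the number of traces and $K_{k,k}$-freeness controls the multiplicity of each trace. The only thing one has to be slightly careful about is the case $\cardin{A} < k$, where the multiplicity of shapes mapping to $A$ is not bounded by the $K_{k,k}$ condition, but there the per-shape incidence count is already small enough to absorb it into the $O(km)$ term.
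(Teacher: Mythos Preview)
Your proof is correct and essentially identical to the paper's own argument: the paper also partitions $\Family$ into shapes containing more than $k$ points (your $\Family_{\mathrm{big}}$) versus the rest, bounds the latter by $O(km)$ trivially, and for the former uses exactly your trace/equivalence-class argument---$O(n^b)$ classes by the shatter bound, at most $k$ shapes per class by $K_{k,k}$-freeness, at most $n$ incidences per shape.
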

\begin{proof}
    Let $\Family_0$ be the set of all shapes of $\Family$ that contain
    more than $k$ points.  Call two shapes $\obj$ and $\obj'$
    equivalent if $\PS\cap\obj = \PS\cap\obj'$.  There are $O(n^b)$
    equivalence classes of shapes.  Each equivalence class contains at
    most $k$ shapes in $\Family_0$, by the $K_{k,k}$-free
    assumption. Thus, $\cardin{\Family_0} \leq O(k n^b)$, and
    $I(\P,\Family_0)\le O(k n^{b+1})$.  On the other hand,
    $I(\P,\Family-\Family_0)\le O(km)$.
\end{proof}

\begin{remark}
    If $\G(\P,\Family)$ does not contain $K_{k,k}$, it is not
    difficult to see that the primal shatter dimension must be bounded
    by $O(k)$.  However, most families of shapes encountered in
    low-dimensional geometry have $O(1)$ VC dimension, and thus by
    Sauer's lemma~\cite{m-ldg-02}, actually have $O(1)$ primal shatter
    dimension independent of $k$.
\end{remark}

\subsection{Shapes with low union complexity}

We now modify our approach for 3D halfspaces to prove incidence bounds
for 2D well-behaved shapes with low (near-linear) union complexity.
We do not have duality now, and as a result, a straightforward
adaptation of the proof of \lemref{s:c:halfspaces} would result in an
extra logarithmic factor.  We describe a nontrivial modification to
lower the extra factor to $\log\log m + \log k$.

\begin{lemma}
    \lemlab{depth}%
    Let $\PS$ be a set of $n$ points, and let $\Family$ be a set $\nF$
    well-behaved shapes in the plane, with union complexity
    $O\bigl( \FZero(\nF) \bigr)$.  If the graph $\G(\P,\Family)$ does
    not contain $K_{k,k}$, then for any $r\le m/(2k)$, the number of
    points of $\PS$ having depth between $m/r$ and $2m/r$ is at most
    $O(k\cdot \FZero(r))$.
\end{lemma}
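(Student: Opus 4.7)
The plan is to mirror the argument used in \lemref{level} for halfspaces, replacing shallow cuttings for hyperplanes by the shallow-cutting construction of \thmref{shallow:cutting} for well-behaved shapes. The $K_{k,k}$-free hypothesis is again used to bound the number of points inside any single cell of a suitably chosen shallow cutting, and the key is to pick the cutting parameters so that (a) all points of depth $\le 2\nF/r$ are covered, and (b) cells intersect few enough shape boundaries that the points of depth $\ge \nF/r$ lying inside a cell force the cell to be contained in many shapes.

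Concretely, I would apply \thmref{shallow:cutting} with cutting parameter $2r$ in place of $r$ and depth threshold $2\nF/r$ in place of $k$. This yields a decomposition $\Cutting$ of the plane into cells of constant descriptive complexity such that each cell intersects the boundaries of at most $\nF/(2r)$ shapes of $\Family$, and the number of cells meeting points of depth $<2\nF/r$ is
\begin{equation*}
    O\!\left( \bigl(2r\cdot (2\nF/r)/\nF + 1\bigr)^{2} \FZero\!\bigl(\nF/(2\nF/r)\bigr) \right)
    \ =\ O\bigl(\FZero(r)\bigr),
\end{equation*}
using that $d=2$ and that $\FZero(r)/r$ is nondecreasing. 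In particular, every point of $\PS$ whose depth lies in $[\nF/r, 2\nF/r]$ is contained in some cell of this collection.

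Next I would use the $K_{k,k}$-free assumption cellwise. Fix a cell $\nabla\in\Cutting$ that contains at least one point $p\in\PS$ with depth $\ge \nF/r$. Of the shapes containing $p$, at most $\nF/(2r)$ of them have their boundary cross $\nabla$, so at least $\nF/r - \nF/(2r) = \nF/(2r)$ shapes contain $\nabla$ entirely. Since $r\le \nF/(2k)$, this gives $\ge k$ shapes whose interior contains $\nabla$, and hence every point of $\PS\cap\nabla$ lies in all of these shapes. By the forbidden-$K_{k,k}$ hypothesis, it follows that $\cardin{\PS\cap\nabla}<k$. Summing this bound over the $O(\FZero(r))$ relevant cells gives $O(k\,\FZero(r))$ points of the desired depth, which is exactly the claim.

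The argument is a clean analog of the halfspace case, so there is no substantial obstacle; the only things to get right are the parameter choices in \thmref{shallow:cutting} (factors of two in $r$ and in the depth threshold) and checking that $\FZero$'s monotonicity swallows the constants in $(\cdot)^2\cdot\FZero(r/2)$. The nondecreasing property of $\FZero(r)/r$ is what lets us replace $\FZero(r/2)$ by $O(\FZero(r))$ at the end.
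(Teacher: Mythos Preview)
Your proposal is correct and follows essentially the same approach as the paper's own proof: apply \thmref{shallow:cutting} with cutting parameter $2r$ and depth threshold $2\nF/r$, count the relevant cells as $O(\FZero(r))$, and use the $K_{k,k}$-free hypothesis to bound each cell's contribution by $k$. You are in fact more explicit than the paper about the parameter substitution and the use of the monotonicity of $\FZero(r)/r$ to absorb the $\FZero(r/2)$.
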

\begin{proof}
    Compute a $1/(2r)$-cutting $\Xi$ by \thmref{shallow:cutting}.  Let
    $\Xi$ be the cells of the cutting that contain at least one point
    with depth between $m/r$ and $2m/r$; there are $O\bigl(\FZero(r))$
    such cells.  Consider a cell $\nabla\in\Xi$.  Since $\nabla$
    contains a point of depth at least $m/r$ and intersects the
    boundaries of at most $m/(2r)$ shapes, the number of shapes
    completely containing $\nabla$ is at least
    $m/r-m/(2r)=m/(2r)\ge k$.  By the $K_{k,k}$-free assumption, we
    must have $\cardin{\nabla\cap \PS} < k$.
\end{proof}

\begin{theorem}
    \thmlab{incidences}%
    Let $\PS$ be a set of $n$ points in the plane, and let $\Family$
    be a set $\nF$ well-behaved shapes in the plane, with union
    complexity $O\bigl( \FZero(\nF) \bigr)$ and constant primal
    shatter dimension.  If the graph $\G(\P,\Family)$ does not contain
    $K_{k,k}$, then the maximum number of incidences between $\P$ and
    $\Family$ is
    $I(n,\nF) = O\bigl( k n + k \FZero(\nF) (\log\log m + \log
    k)\bigr)$.
\end{theorem}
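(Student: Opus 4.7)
The plan is to adapt the 3D-halfspace recursion of \lemref{s:c:halfspaces:3d} while compensating for the lack of duality by reducing the shape count \emph{polynomially} at each recursion level, so that only $O(\log\log \nF)$ levels are needed; \lemref{depth} is the critical tool that bounds the number of deep points at each stage.

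First I would peel off the shallow part of $\PS$. Applying \lemref{depth} with parameter $r=\nF/(ck)$ for a sufficiently large constant $c$, and summing over dyadic depth bands using the telescoping identity $\sum_{j\ge 0}\FZero(r/2^j)=O(\FZero(r))$ (which follows from $\FZero(x)/x$ being nondecreasing), the number of points whose depth exceeds $O(k)$ is $O(\FZero(\nF))$, while the remaining points have depth $O(k)$ and contribute at most $O(kn)$ incidences in total. This reduces the problem to bounding $I(N,\nF)$ under the standing invariant $N=O(\FZero(\nF))$.

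For this reduced subproblem I would apply \thmref{shallow:cutting} with parameter $r=\sqrt{\nF}$, yielding $O(\FZero(\sqrt{\nF}))$ shallow cells, each crossed by $O(\sqrt{\nF})$ shape boundaries and fully enclosed by at most $O(\sqrt{\nF})$ shapes (by the depth bound at a point of each cell, so crossings and containments together stay bounded). Subdividing to balance the point counts among cells and recursing inside each cell gives a recurrence of the rough form
\begin{equation*}
    T(N,\nF) \ \le\ O\bigl(\FZero(\sqrt{\nF})\bigr)\cdot T\bigl(O(N/\FZero(\sqrt{\nF})),\, O(\sqrt{\nF})\bigr) \ +\ D(\nF),
\end{equation*}
which preserves the invariant while replacing $\nF$ by $\sqrt{\nF}$. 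Since $\nF\to\sqrt{\nF}\to \nF^{1/4}\to\cdots$ terminates after $O(\log\log \nF)$ steps, and the target inductive bound $T(\nF)=ck\FZero(\nF)\log\log\nF$ drops by exactly one level at each recursion step, the recursion will produce the desired $O(k\FZero(\nF)\log\log\nF)$ term, \emph{provided} the per-level overhead $D(\nF)$ is $O(k\FZero(\nF))$.

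The hard part will be controlling $D(\nF)$, the contribution of the points of the current subproblem whose depth exceeds $\sqrt{\nF}$. By \lemref{depth} these number only $O(k\FZero(\sqrt{\nF}))$, but naively multiplying by the shape count $\nF$ yields an unacceptable $O(k\nF\sqrt{\nF})$ bound. I would treat them with a secondary argument: once the deep set is polynomially smaller than the shape count, invoke \lemref{shatter} so that the term $k|\text{deep}|^{b+1}$ (with $b$ the constant primal shatter dimension) is absorbed into $O(k\FZero(\nF))$; otherwise re-enter the recursion on the deep subset, which has polynomially fewer points while the shape count stays at $\nF$, incurring only $O(1)$ overhead per level. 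The additional $\log k$ term in the final bound would account for the regime where \lemref{depth} ceases to apply (depth $\le 2k$) together with the boundary transition at which \lemref{shatter} is activated; these contributions sum to $O(k\FZero(\nF)\log k)$ rather than $O(k\FZero(\nF)\log\nF)$.
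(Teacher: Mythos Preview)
Your high-level instinct---shallow cuttings, aim for $O(\log\log \nF)$ depth---matches the paper's, but the execution has a genuine gap in the treatment of $D(\nF)$, and that is precisely where the paper's actual idea lives.

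Both of your proposed fixes for the deep points fail as written. Applying \lemref{shatter} directly to the $O\bigl(k\FZero(\sqrt\nF)\bigr)$ deep points against all $\nF$ shapes produces a term $k\cdot\bigl(k\FZero(\sqrt\nF)\bigr)^{b+1}$, which for any $b\ge 1$ is at least $k^{b+2}\nF^{(b+1)/2}\gg k\FZero(\nF)$; a square-root gap between point and shape counts is nowhere near enough for \lemref{shatter}. And ``re-entering the recursion'' on the deep set makes no progress: those points all have depth exceeding $\sqrt\nF$, so splitting again at $\sqrt\nF$ leaves every one of them deep. To rescue this you would have to raise the threshold at each re-entry ($\sqrt\nF\to\nF^{3/4}\to\nF^{7/8}\to\cdots$), at which point you are effectively running a depth-band decomposition with geometrically growing ratios---which is exactly what the paper does, only without your outer $\nF\to\sqrt\nF$ recursion wrapped around it.

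Concretely, the paper does \emph{not} recurse on the shallow part. It fixes up front a single increasing sequence $t_0=2k,\,t_1,\ldots,t_\ell\ge\nF$ with $t_i=2t_{i-1}$ for $i\le c\log k$ and $t_i=t_{i-1}^{c/(c-1)}$ thereafter, so $\ell=O(\log k+\log\log\nF)$. For the band $[t_{i-1},t_i]$ it takes a $(t_i/\nF)$-cutting, obtaining $O\bigl((\nF/t_i)\phi(\nF)\bigr)$ cells each meeting $O(t_i)$ shapes; after balancing, each cell holds only $O(kt_i/t_{i-1})$ band points. The sequence is calibrated so that once $t_i\ge k^c$ this point count is at most $t_i^{2/c}$, and \emph{then} \lemref{shatter} kicks in cell by cell, giving $O(kt_i)$ incidences per cell and hence $O(k\FZero(\nF))$ per band. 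The factor $\log\log\nF+\log k$ is simply the number of bands.

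As a secondary issue, even granting $D(\nF)=O(k\FZero(\nF))$ your outer induction does not close: the step compares $c_1\FZero(\sqrt\nF)\cdot\FZero(c_2\sqrt\nF)$ against $\FZero(\nF)$, i.e.\ $c_1c_2\,\phi(\sqrt\nF)^2$ against $\phi(\nF)$, and the cutting constants $c_1,c_2$ prevent this from holding even when $\phi$ is constant. The paper sidesteps this entirely by never recursing on~$\nF$.
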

\begin{proof}
    Let $\phi(m)=\FZero(m)/m$.  Let $t_0,t_1,\ldots, t_{\ell}$ be an
    increasing sequence of parameters, with $t_0=2k$ and
    $t_\ell\ge m$.  We break $\PS$ into $O(\ell)$
    classes. Specifically, a point $p \in \PS$ is in $\PS_0$ if its
    depth is below $t_0$. For $i>0$, $p \in \PS_i$ if its depth is
    between $t_{i-1}$ and $t_i$.

    For $i > 0$, we have
    $\cardin{\PS_i} = O\bigl(k\cdot (\FZero(m/(2t_{i-1})) +
    \FZero(m/(4t_{i-1})) + \cdots)\bigr)$ by \lemref{level}; this
    gives $\cardin{\PS_i} = O(k(m/t_{i-1})\phi(m))$.  Compute a
    $t_i/m$-cutting $\Xi$ by \thmref{shallow:cutting}.  Let $\Xi$ be
    the cells of the cutting that intersect $\PS_i$; there are
    $O\bigl(\FZero(m/t_i))\le O((m/t_i)\phi(m))$ such cells.  Consider
    a cell $\nabla\in\Xi$.  Since $\nabla$ contains a point of depth
    at most $t_i$ and intersects the boundaries of at most $t_i$
    shapes, the number of shapes intersecting or containing $\nabla$
    is $O(t_i)$.  By subdividing the simplices into subcells, we can
    ensure that each subcell contains at most
    $O\left(\frac{k(m/t_{i-1})\phi(m)}{(m/t_i)\phi(m)}\right) =
    O(kt_i/t_{i-1})$ points of $\PS_i$, while keeping the number of
    subcells $O\bigl((m/t_i)\phi(m)\bigr)$.  It follows that for
    $i>0$,
    \begin{equation*}
        I(\PS_i, \BS)
        \leq
        O\bigl((m/t_i)\phi(m)\bigr)\cdot I(kt_i/t_{i-1}, t_i).
    \end{equation*}

    For $i=0$, we use the trivial upper bound
    $I(\PS_0,\BS) \le O(\cardin{\PS_0} \cdot t_0) = O(kn)$.

    It follows that
   \begin{equation}\eqlab{low:union}
       I(n,m)
       \ \leq\ %
       \sum_{i=1}^\ell
       O\bigl((m/t_i)\phi(m)\bigr)\cdot I(kt_i/t_{i-1}, t_i) + O(kn).
    \end{equation}

    We choose the sequence $t_0=2k$, $t_i=2t_{i-1}$ for
    $0<i\le c\log k$, and $t_i=t_{i-1}^{c/(c-1)}$ for $i>c\log k$ for
    some constant $c$.  The number of terms is
    $\ell=O(\log k + \log\log m)$.  For $i\le c\log k$, we have
    $I(kt_i/t_{i-1}, t_i) = I(2k,t_i) = O(kt_i)$.  For $i>c\log k$, we
    have $t_i\ge k^{c}$ and
    $I(kt_i/t_{i-1}, t_i) = I(k t_i^{1/c}, t_i) \le I(t_i^{2/c}, t_i)
    = O(kt_i)$ by \lemref{shatter} if $c$ is sufficiently large.  Then
    \Eqref{low:union} implies
    $I(n,m)=O(\ell km\phi(m) + kn)=O(km\phi(m)(\log k+\log\log m) +
    kn)$.
\end{proof}

A set $\Family$ of simply-connected shapes in the plane is a
\emphw{set of pseudo-disks}, if for every pair of shapes, their
boundaries intersect at most twice. It known that the union complexity
of pseudo-disks is linear \cite{sa-dsstg-95} and the primal shatter
dimension is $O(1)$~\cite{m-ldg-02}. For technical reasons it is convenient
to assume that the pseudo-disks are $y$-monotone -- that is, any
vertical line either avoids a pseudo-disk, or intersects it in an
interval.  We thus get the following.

\begin{corollary}
    \corlab{p:disks}%
    Let $\P$ be a set of $n$ points in the plane, and let $\Family$ be
    a set of $\nF$ $y$-monotone pseudo-disks in the plane.  If the
    graph $\G(\P,\Family)$ does not contain $K_{k,k}$, then the
    maximum number of incidences between $\P$ and $\Family$ is bounded
    by $I(n,m) = O\bigl( k n + k \nF (\log\log \nF + \log k)\bigr)$.
\end{corollary}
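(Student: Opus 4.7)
The plan is to obtain the corollary as a direct application of \thmref{incidences}, so the bulk of the work is verifying that the family of $y$-monotone pseudo-disks satisfies the hypotheses of that theorem, namely (a) well-behavedness in the sense of \defref{well:behaved}, (b) near-linear union complexity, and (c) constant primal shatter dimension. Once these three properties are in place, plugging $\FZero(\nF)=O(\nF)$ into the conclusion of \thmref{incidences} immediately yields the stated bound $O\bigl(kn + k\nF(\log\log\nF + \log k)\bigr)$.

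First I would recall the standard facts about pseudo-disks: by Kedem, Livne, Pach, and Sharir (see \cite{sa-dsstg-95}), the union of any $r$ pseudo-disks has complexity $O(r)$, so $\FZero(r)=O(r)$, and this is clearly a function with $\FZero(r)/r$ nondecreasing (in fact constant). Since each pair of pseudo-disk boundaries crosses at most twice, the arrangement of any $r$ pseudo-disks has $O(r^2)$ vertices, edges and faces, giving condition (II) of \defref{well:behaved}. Condition (IV) is immediate since the arrangement of two pseudo-disks has constant complexity. For the primal shatter dimension, pseudo-disks have $O(1)$ VC-dimension (a classical fact, see \cite{m-ldg-02}), and hence $O(1)$ primal shatter dimension by Sauer's lemma, as already noted in the remark after \lemref{shatter}.

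The one condition that genuinely needs the $y$-monotone assumption is the decomposability condition (III) of \defref{well:behaved}: given a subfamily $\FamilyB\subseteq\Family$ and a set $X\subseteq\Re^2$, we must produce a decomposition of the cells of $\ArrX{\FamilyB}$ meeting $X$ into constant-descriptive-complexity pieces, with total complexity proportional to the number of arrangement vertices meeting $X$. For $y$-monotone pseudo-disks this is handled by vertical trapezoidal (i.e. vertical) decomposition in the standard way: because each pseudo-disk boundary is $y$-monotone, a vertical line through an arrangement vertex cuts each boundary in at most a point on its way up or down, so the usual vertical-decomposition analysis yields a decomposition whose size is linear in the number of arrangement vertices of the cells that meet $X$. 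This is exactly the place where the $y$-monotonicity is used, and it is the one check I would expect a careful reader to ask for; everything else is off-the-shelf.

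With all four conditions of \defref{well:behaved} verified and $\FZero(\nF)=O(\nF)$ together with $O(1)$ primal shatter dimension established, \thmref{incidences} applies verbatim and gives
\begin{equation*}
  I(n,\nF) \;=\; O\bigl(kn + k\FZero(\nF)(\log\log \nF + \log k)\bigr) \;=\; O\bigl(kn + k\nF(\log\log\nF + \log k)\bigr),
\end{equation*}
which is the claimed bound. No additional estimates or recurrences are required beyond what is already packaged in \thmref{incidences}.
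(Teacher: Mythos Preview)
Your proposal is correct and follows exactly the paper's approach: the corollary is obtained by observing that $y$-monotone pseudo-disks are well-behaved with linear union complexity $\FZero(\nF)=O(\nF)$ and $O(1)$ primal shatter dimension, and then invoking \thmref{incidences}. You have simply made explicit the verification of the conditions in \defref{well:behaved} (in particular correctly identifying that the $y$-monotone hypothesis is what enables the vertical decomposition in condition~(III)), which the paper leaves implicit.
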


A natural open question is to improve the bound to linear for
pseudodisks.

\begin{remark}
    The idea of taking a sequence of $O(\log\log m)$ shallow cuttings
    was also used in some known data structures for 3D halfspace range
    reporting \cite{Chan00,r-orrrs-99} -- the above proof is partly
    inspired by these algorithmic works.
\end{remark}

\subsection{Fat triangles}

A set of triangles $\Family$ is \emphi{fat} if the smallest angle in
any triangle of $\Family$ is bounded away from $0$ by some absolute
positive constant.  Aronov \etal \cite{abes-ibulf-14} showed that
the union complexity here is $\FZero(\nF)=O(\nF\log^*\nF)$.
So, \thmref{incidences} immediately implies $I(n,m) = O( k n + k \nF \log^* \nF (\log\log\nF + \log k))$.
In this subsection,
we show how to further lower the $\log\log\nF$ factor to $\log^*\nF$ for fat triangles.

We first prove a weaker bound $I(n,m) = O(k n\log^2 n + km)$ for fat triangles.
To this end, we start with the following lemma:

\begin{lemma}
    \corlab{curtains}%
    (I) Let $\PS$ be a set $n$ points in $\Re^3$, and $\BS$ be a set
    of $m$ wedges of the form $\{(x,y,z)\mid y\le ax+b,\ z\le c\}$. If
    the graph $G(\PS,\BS)$ does not contain $K_{k,k}$, then
    $I(\PS,\BS) = O\bigl(k(n+m)\bigr)$.

    \medskip\noindent%
    (II) Let $\PS$ be a set $n$ points in $\Re^2$, and $\BS$ be a set
    of $m$ wedges of the form $\{(x,y)\mid y\le ax+b,\ x\le c\}$. If
    the graph $G(\PS,\BS)$ does not contain $K_{k,k}$, then
    $I(\PS,\BS) = O\bigl(k(n+m)\bigr)$.

    \medskip\noindent%
    (III) Let $\PS$ be a set $n$ points in $\Re^2$, and $\BS$ be a set
    of $m$ ``curtains'' of the form
    $\{(x,y)\mid y\le ax+b,\ c\le x\le c'\}$. If the graph
    $G(\PS,\BS)$ does not contain $K_{k,k}$, then
    $I(\PS,\BS) = O\bigl(k(n\log n +m)\bigr)$.
\end{lemma}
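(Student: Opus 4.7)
The plan is to prove (I) first, since (II) follows from (I) by a trivial lifting and (III) follows from (II) by divide-and-conquer on the $x$-coordinate.

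For (I), I would mimic the proof of \lemref{s:c:halfspaces:3d}. The structural observation that makes this possible is that these 3D wedges admit a natural self-duality: the set of wedges $(a,b,c)$ containing a fixed point $(p_x,p_y,p_z)$ is $\{(a,b,c) : b \ge p_y - p_x a,\ c \ge p_z\}$, and in the $(a,b,c)$-parameter space (with $a\to x$, $b\to y$, $c\to z$) this becomes $\{(x,y,z) : y \ge -p_x x + p_y,\ z \ge p_z\}$, which after flipping the signs of $y$ and $z$ is again a wedge of the prescribed form $\{y \le a'x+b',\ z \le c'\}$. Hence dualizing swaps the roles of points and wedges and preserves the incidence graph, giving $I(n,m) = I(m,n)$. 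Next I would establish the analog of \lemref{level} for 3D wedges: under the $K_{k,k}$-free assumption, the number of points of $\PS$ at depth between $m/r$ and $2m/r$ in the arrangement of $\BS$ is $O(kr)$. The argument uses a shallow cutting for 3D wedges (analogous to \thmref{shallow:cutting}, viewing the wedges as well-behaved convex shapes): any cell touching a point of depth $\ge m/r$ is completely contained in at least $k$ wedges, and the $K_{k,k}$-free property then caps the number of such points per cell at $k-1$. Following \lemref{s:c:halfspaces:3d}, summing over dyadic depth classes gives $O(kr)$ deep points in total, which is $\le m/2$ once we set $r = m/(ck)$; combined with the trivial $O(kn)$ contribution from the bottom class, this yields
\begin{equation*}
   I(n,m)\ \le\ I(m/2, m) + O(kn).
\end{equation*}
The self-duality gives the symmetric recurrence $I(n,m) \le I(n, n/2) + O(km)$, and alternating the two collapses to $I(n,m) = O\pth{k(n+m)}$.

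For (II), lift each planar point $(p_x,p_y)$ to the 3D point $(p_x,p_y,p_x)$ and view each 2D wedge $\{y \le ax+b,\ x \le c\}$ as the 3D wedge $\{y \le ax+b,\ z \le c\}$. The incidence condition $p_y \le ap_x+b$ together with $p_x \le c$ is preserved verbatim under the lift, so the incidence graph (and its $K_{k,k}$-freeness) transfers, and (I) applies directly.

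For (III), divide-and-conquer on $x$. Let $x^*$ be the median $x$-coordinate of $\PS$, and split $\PS$ into $\PS_L$ and $\PS_R$ of size $\le n/2$ each. Partition $\BS$ into curtains lying entirely in the left slab ($\BS_L$), entirely in the right slab ($\BS_R$), and straddling the median ($\BS_M$), with $\cardin{\BS_L}=m_L$, $\cardin{\BS_R}=m_R$, $\cardin{\BS_M}=m_M$ and $m_L+m_R+m_M=m$. The non-straddling curtains recurse into their respective halves. A straddling curtain $\{y \le ax+b,\ c \le x \le c'\}$ restricted to $\PS_L$ degenerates to the wedge $\{y \le ax+b,\ x \ge c\}$ (the constraint $x \le c'$ is automatic on $\PS_L$), which is a mirror image of the form in (II) under $x \mapsto -x$ and therefore enjoys the same linear incidence bound; symmetrically on $\PS_R$ it becomes a wedge of form (II). Each side thus contributes $O\pth{k(n/2 + m_M)}$ straddling incidences, giving the recurrence
\begin{equation*}
   I(n,m)\ \le\ I(n/2, m_L) + I(n/2, m_R) + O\pth{k(n + m_M)},
\end{equation*}
which unwinds to $I(n,m) = O\pth{k(n\log n + m)}$.

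The main obstacle is in (I): constructing shallow cuttings for 3D wedges and proving the \lemref{level}-style depth bound, since this specific family of 3D convex shapes is not among the classical ones with off-the-shelf shallow-cutting constructions. One route is to verify that the wedges satisfy \defref{well:behaved} and invoke \thmref{shallow:cutting}; another is a dedicated construction combining a shallow cutting for the oblique bounding planes $y = ax+b$ with the $z \le c$ caps. Once this piece is in place, the self-duality makes the remainder of the argument a close parallel of the 3D halfspace proof in \lemref{s:c:halfspaces:3d}.
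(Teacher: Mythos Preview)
Your proposal is correct and matches the paper's approach almost exactly: the same lifting for (II), the same median-split divide-and-conquer for (III), and for (I) the same self-duality combined with the recursion of \lemref{s:c:halfspaces:3d}. The paper dispatches the one obstacle you flag---the shallow-cutting input for 3D wedges---with a short sweep-plane argument (sweep a horizontal plane downward; on the plane the complement of the union is an intersection of halfplanes, and each new wedge inserts one halfplane creating at most two vertices), yielding linear union complexity and hence the needed \lemref{level}-analog.
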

\begin{proof}
    For (I), the union complexity for $m$ wedges of this type is
    $O(m)$.  To see this, imagine moving a horizontal sweep plane from
    top to bottom.  On the sweep plane, the complement of the union is
    an intersection ${\cal I}$ of halfplanes.  As the sweep plane
    moves, we encounter $m$ insertions of halfplanes; each insertion
    creates at most two vertices to ${\cal I}$.  Thus, the total
    number of vertices created is $O(m)$.  Having established linear
    union complexity, we could now apply \thmref{incidences}, but we
    can do better by observing that we have duality here: a point
    $(p_x,p_y,p_z)$ is in the wedge
    $\{(x,y,z)\mid y\le ax+b,\ z\le c\}$ iff the point $(a,-b,-c)$ is
    in the wedge
    $\{(\alpha,\beta,\gamma)\mid \beta\le p_x\alpha-p_y,\ \gamma\le
    -p_z\}$.  So, we can adapt the proof in \lemref{s:c:halfspaces:3d}
    to obtain the $O(k(n+m))$ incidence bound.

    \smallskip%
    (II) is a special case of (I): a point $(p_x,p_y)$ is in the wedge
    $\{(x,y)\mid y\le ax+b,\ x\le c\}$ iff the point $(p_x,p_y,p_x)$
    is in the wedge $\{(x,y,z)\mid y\le ax+b,\ z\le c\}$.

    \smallskip%
    For (III), we use divide-and-conquer to reduce to (II).  Divide
    the plane into two vertical slabs $\sigma_1$ and $\sigma_2$, each
    containing $n/2$ points of $\PS$.  For each $i\in\{1,2\}$, let
    $\PS_i=P\cap\sigma_i$, and let $\BS_i$ be the set of all curtains
    of $\BS$ that are completely inside $\sigma_i$. Let $\BS_0$ be the
    set of all curtains of $\BS$ that intersect both $\sigma_1$ and
    $\sigma_2$.  Inside $\sigma_i$, the curtains in $\BS_0$ may be
    viewed as wedges, and so
    \begin{math}
        I(\PS_i,\BS_0)=O\bigl(k(\cardin{\PS_i} + \cardin{\BS_0})
        \bigr)
    \end{math}
    by (II).  Thus,
    $I(\PS,\BS)\le I(\PS_1,\BS_1) + I(\PS_2,\BS_2) +
    O\bigl(k(n+\cardin{\BS_0})\bigr)$.  We obtain the recurrence
    \[
        I(n,m)\ \le \max_{m_0,m_1,m_2:\ m_0+m_1+m_2= m}
        \left(I(n/2,m_1)+I(n/2,m_2)+ O(kn+km_0)\right),
    \]
    which solves to $I(n,m)\le O(kn\log n + km)$.
\end{proof}

The special case of triangles containing the origin can be reduced to
the case of curtains:

\begin{corollary}
    \corlab{origin:triangles}%
    Let $\P$ be a set of $n$ points in the plane, and let $\Family$ be
    a set of $\nF$ triangles in the plane, such that all triangles
    contain the origin.  If the graph $\G(\P,\Family)$ does not
    contain $K_{k,k}$, then the maximum number of incidences between
    $\P$ and $\Family$ is bounded by $O\bigl( k (n\log n+m)\bigr)$.
\end{corollary}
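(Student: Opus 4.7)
The plan is to reduce the statement to Corollary~\corref{curtains}~(III) by means of a projective change of coordinates that turns triangles containing the origin into curtains.

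First I would split every triangle $T \in \Family$ into a constant number of sub-triangles, each having the origin as one vertex and lying entirely in a closed half-plane $\{y \ge 0\}$ or $\{y \le 0\}$. Since $T$ contains the origin, the $x$-axis cuts $T$ into at most two convex pieces that share the origin on their boundary; treating the origin as a vertex and triangulating each piece from the origin yields the desired $O(1)$ sub-triangles. This multiplies $m$ by an absolute constant and leaves $\PS$ untouched, so the resulting incidence bound transfers.

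By symmetry it suffices to handle the sub-triangles (and the points) in the open upper half-plane. There I would apply the projective map $\phi(x, y) = (x/y,\, 1/y)$, which is a bijection of $\{y > 0\}$ onto itself and therefore preserves the incidence graph. Under $\phi$, the ray from the origin at angle $\theta \in (0, \pi)$ maps to the vertical half-line $u = \cot\theta$, $v > 0$, and a non-horizontal line $y = ax + b$ with $b \ne 0$ maps to the line $1 = au + bv$. Consequently, a sub-triangle with origin as a vertex, angular range $[\theta_1, \theta_2] \subseteq [0, \pi]$, and bounding edge on $y = ax + b$ (with $b > 0$, so the origin lies on the correct side) maps to
\begin{equation*}
  \bigl\{ (u, v) : \cot\theta_2 \le u \le \cot\theta_1,\ v \ge (1 - au)/b,\ v > 0 \bigr\}.
\end{equation*}
After the reflection $v \mapsto -v$ this is precisely a curtain in the sense of Corollary~\corref{curtains}~(III), or a wedge in the sense of part~(II) when one of the vertical boundaries lies at infinity (which happens exactly when the angular range touches $0$ or $\pi$).

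Finally I would apply parts~(II) and~(III) of Corollary~\corref{curtains} to the transformed points and curtains/wedges in each half-plane; both parts yield $O\bigl(k(n \log n + m)\bigr)$, and summing over the two half-planes and the $O(1)$ sub-triangles per original triangle gives the stated bound. The main obstacle I anticipate is the bookkeeping near the $x$-axis, where $\phi$ degenerates and where one must allow the vertical boundaries of the curtains to lie at infinity; the half-plane splitting in the first step isolates this degeneracy into the easier wedge case (part~(II)), whose bound is already strong enough to match the final estimate.
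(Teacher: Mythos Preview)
Your proposal is correct and follows essentially the same route as the paper: decompose into $O(1)$ triangles having the origin as a vertex, pass to a half-plane, and apply a projective change of coordinates that sends rays through the origin to vertical lines, so that each sub-triangle becomes a curtain and \corref{curtains}~(III) applies. The paper uses the map $(x,y)\mapsto(-1/x,\,y/x)$ rather than your $(x,y)\mapsto(x/y,\,1/y)$, but these differ only by a swap of the roles of the two axes. Your write-up is in fact more careful than the paper's about the half-plane bookkeeping and the degenerate wedge case; the one small slip is the parenthetical ``with $b>0$'', which need not hold, though your displayed formula $v\ge(1-au)/b$ is valid for either sign of $b\neq 0$ once one tracks the inequality correctly.
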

\begin{proof}
    Without loss of generality, assume that the triangles all have the
    origin as one of its vertices, since a triangle containing the
    origin can be decomposed into three such triangles.  Also assume
    that the points in $\P$ are all above the $x$-axis (since we can
    treat the points below the $x$-axis in a similar way, and add the
    two bounds).  The affine transformation $(x,y)\mapsto (-1/x, y/x)$
    then maps triangles with the origin as a vertex to curtains.
\end{proof}

\begin{lemma}
    \lemlab{fat:triangles}%
    Let $\P$ be a set of $n$ points in the plane, and let $\Family$ be
    a set of $\nF$ fat triangles in the plane. If the graph
    $\G(\P,\Family)$ does not contain $K_{k,k}$, then the maximum
    number of incidences between $\P$ and $\Family$ is bounded by
    $I(n,m) = O\bigl( k (n\log^2 n + m)\bigr)$.
\end{lemma}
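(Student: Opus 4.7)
The strategy is to cluster fat triangles by size and location so that each cluster consists of triangles sharing a common interior point, then apply \corref{origin:triangles} to each cluster.

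First, partition $\Family$ into size classes $\Family_i = \{T \in \Family : \mathrm{diam}(T) \in [2^i, 2^{i+1})\}$. The relevant range of $i$ spans only $O(\log n)$ values, since triangles with diameter below the minimum inter-point separation of $\P$ each contain at most one point and together contribute at most $O(m)$ incidences, which can be absorbed into the final bound. For each class $i$, overlay a uniform grid $G_i$ with cell side length $\delta_i = \Theta(\alpha \cdot 2^i)$, where $\alpha$ is the fatness constant and the hidden constant is chosen small enough that $\sqrt{2}\,\delta_i$ is smaller than the inradius of every $T \in \Family_i$. The point of this choice is the standard fact that, by fatness, the inradius of $T \in \Family_i$ is $\Omega(\alpha \cdot 2^i)$. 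Assign each $T \in \Family_i$ to the grid cell of $G_i$ containing its incenter; the inscribed disk of $T$ then contains this entire grid cell, so in particular $T$ contains the cell's center $\hat{c}$.

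For each non-empty pair $(\nu, i)$, let $\Family_{\nu,i}$ be the triangles assigned to $(\nu,i)$, which share the common interior point $\hat{c}(\nu)$. Each such triangle has diameter at most $2^{i+1}$, so every point incident to any triangle in $\Family_{\nu,i}$ lies in the disk $B_{\nu,i}$ of radius $2^{i+2}$ around $\hat{c}(\nu)$. Setting $\P_{\nu,i} = \P \cap B_{\nu,i}$, $n_{\nu,i} = |\P_{\nu,i}|$, and $m_{\nu,i} = |\Family_{\nu,i}|$, applying \corref{origin:triangles} after translating $\hat{c}(\nu)$ to the origin yields
\[
   I(\P_{\nu,i},\Family_{\nu,i}) = O\bigl(k(n_{\nu,i}\log n_{\nu,i} + m_{\nu,i})\bigr).
\]

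Finally, sum over all $(\nu,i)$. Each triangle is assigned to exactly one pair, so $\sum_{\nu,i} m_{\nu,i} = m$. At level $i$, each fixed point of $\P$ lies in only $O(1)$ disks $B_{\nu,i}$, since those disks have radius $O(2^i)$ while cells at level $i$ have side $\Theta(2^i)$. Summing over $O(\log n)$ levels gives $\sum_{\nu,i} n_{\nu,i} = O(n \log n)$. Hence
\[
   I(n,m) \leq \sum_{\nu,i} O\bigl(k(n_{\nu,i} \log n_{\nu,i} + m_{\nu,i})\bigr) = O\bigl(k(n \log^2 n + m)\bigr).
\]

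The main difficulty is controlling the number of size classes to be $O(\log n)$ regardless of the numerical spread of the input; this is the reason triangles with diameter below the minimum inter-point separation are treated by the separate trivial bound. A secondary technicality is choosing the constant hidden in $\delta_i$ relative to $\alpha$ so that the incircle containment (and hence the fact that all triangles assigned to a common $(\nu,i)$ share the point $\hat{c}(\nu)$) holds uniformly, but this is purely a matter of tuning constants.
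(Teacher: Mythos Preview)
Your argument has a real gap in the claim that ``the relevant range of $i$ spans only $O(\log n)$ values.'' Discarding triangles of diameter below the minimum inter-point separation only truncates the size classes from below; it does nothing to bound them from above, and even after truncation the number of remaining dyadic classes is $\Theta\bigl(\log(D/\delta_{\min})\bigr)$, the logarithm of the \emph{spread} of $\PS$. This is not controlled by any function of $n$.

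Here is a configuration where your bound $\sum_{\nu,i} n_{\nu,i} = O(n\log n)$ fails outright. Place the $n$ points of $\PS$ on the unit circle, and for $j=1,\ldots,m$ let $T_j$ be an isosceles fat triangle with apex at the fixed point $p_1=(1,0)$, opening into the half-plane $\{x\ge 1\}$, of diameter $2^j$. Every $T_j$ contains only $p_1$, so $\G(\PS,\Family)$ is $K_{k,k}$-free for all $k\ge 2$ and there are exactly $m$ incidences. Yet the $T_j$ lie in $m$ distinct size classes, and for every $j$ above a constant the disk $B_{\nu_j,j}$ (radius $2^{j+2}$, centered within $O(2^j)$ of $p_1$) contains the entire unit circle, so $n_{\nu_j,j}=n$. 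Hence $\sum_{\nu,i} n_{\nu,i}=\Theta(mn)$, and your final sum is $\Theta(kmn\log n)$ rather than $O\bigl(k(n\log^2 n+m)\bigr)$. Restricting $\PS_{\nu,i}$ to the points actually hit by $\Family_{\nu,i}$ does not help either, since then $\sum n_{\nu,i}$ is bounded only by the total number of incidences, which is what you are trying to estimate.

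The paper sidesteps the spread issue by recursing on the point set rather than on triangle scales: it finds a quadtree square $s$ with at most $4n/5$ points inside and at most $4n/5$ outside (a centroid of the compressed quadtree), so the recursion depth is $O(\log n)$ irrespective of coordinate spread. A shifting/alignment trick forces every triangle crossing $\partial s$ to have diameter at least a quarter of the side length of $s$; by fatness those boundary triangles are stabbed by $O(1)$ points and are dispatched via \corref{origin:triangles}, while triangles entirely inside or outside $s$ go to the two recursive subproblems. Both proofs use \corref{origin:triangles} as the workhorse; the essential difference is that the paper charges each point to it $O(\log n)$ times via balanced recursion, whereas your grid-by-scale scheme charges each point once per occupied scale, which can be $\Theta(m)$.
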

\begin{proof}
    By rescaling, assume that all points and triangles are contained
    in $[0,1]^2$.

    A \emphi{quadtree square} is a square of the form
    $[i/2^\ell, (i+1)/2^\ell)\times [j/2^\ell, (j+1)/2^\ell)$ for
    integers $i,j,\ell$.  A shape $\obj$ with diameter $r$ is
    \emphi{aligned} if it is inside a quadtree square of side length
    $4r$.

    For any shape $\obj$ contained in $[0,1]^2$, it is
    known~\cite{Chan03} that $\obj$ is aligned, or $\obj + (1/3,1/3)$
    is aligned, or $\obj + (2/3,2/3)$ is aligned.  Without loss of
    generality, assume that all triangles in $\Family$ are aligned,
    since we can separately treat the triangles $\obj$ where
    $\obj + (1/3,1/3)$ is aligned and the triangles $\obj$ where
    $\obj + (2/3,2/3)$ (by shifting all the triangles and points) and
    add the three bounds.

    We use a balanced divide-and-conquer.  First it is
    known~\cite{AryaMNSW98} that there exists a quadtree square $s$
    that has at most $4n/5$ points inside and at most $4n/5$ points
    outside.  (Such a square $s$ can be obtained by taking a
    ``centroid'' in the quadtree; or, to be more self-contained, we
    can let $s$ be the smallest quadtree square that has at least
    $n/5$ points inside.)  Let $\PS_1$ be the set of all points of
    $\PS$ inside $s$, and $\PS_2$ be the set of all points of $\PS$
    outside $s$.  Let $\Family_1$ be the set of all triangles of
    $\Family$ completely inside $s$, $\Family_2$ be the set of all
    triangles of $\Family$ completely outside $s$, and $\Family_0$ be
    the set of all triangles of $\Family$ that intersect the boundary
    of $s$.

    Let $r$ be the side length of $s$.  Because all triangles are
    aligned, the triangles of $\Family_0$ must have diameter at least
    $r/4$.  Because the triangles are fat and intersect a square $s$
    of side length $r$, they can be stabbed by $O(1)$ points.  We can
    bound incidences for the triangles containing each of the $O(1)$
    stabbing points by invoking \corref{origin:triangles} (after
    translating the origin).  It follows that
    $I(\P,\Family_0)=O\bigl(k(n\log n +\cardin{\Family_0})\bigr)$.
    Thus,
    $I(\PS,\Family)\le I(\PS_1,\Family_1) + I(\PS_2,\Family_2) +
    O\bigl(k(n+\cardin{\Family_0})\bigr)$.
    Let
    \begin{equation*}
        M =
        \Set{(n_1,n_2,m_0,m_1,m_2)}{
           \begin{array}{l}
             n_1, n_2, m_0,m_1,m_2 \geq 0,
             \,\,
             n_1,n_2\leq 4n/5,\\
             n_1+n_2=n,\,\,
             m_0+m_1+m_2= m
           \end{array}
        }
    \end{equation*}
    We obtain the recurrence
    \begin{equation*}
        I(n,m)\ \leq \max_{(n_1,n_2,m_0,m_1,m_2) \in M}
        \left(I(n_1,m_1)+I(n_2,m_2)+ O(kn\log n +km_0)\right),
    \end{equation*}
    which solves to $I(n,m)\leq O(kn\log^2 n + km)$.
\end{proof}

\begin{corollary}
    \corlab{fat:triangles}%
    Let $\P$ be a set of $n$ points in the plane, and let $\Family$ be
    a set of $\nF$ fat triangles in the plane, such that
    $\G(\P,\Family)$ does not contain $K_{k,k}$. Then, the maximum
    number of incidences between $\P$ and $\Family$ is bounded by
    \begin{equation*}
        I(n,m) = O\bigl( k n + k \nF (\log^*\nF+\log\log k)
        \log^* \nF \bigr).
    \end{equation*}
\end{corollary}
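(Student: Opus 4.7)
The plan is to revisit the proof of \thmref{incidences}, plugging in the union-complexity bound $\FZero(\nF) = O(\nF \log^* \nF)$ for fat triangles (so $\phi(\nF) = O(\log^* \nF)$), but choosing a much more aggressive sequence of depth thresholds $t_0, t_1, \ldots, t_\ell$. Recall that the generic analysis decomposes $\PS$ into layers $\PS_i$ of depth between $t_{i-1}$ and $t_i$, and bounds layer $i$'s contribution by $O((\nF/t_i)\phi(\nF)) \cdot I(kt_i/t_{i-1}, t_i)$ via a shallow cutting of parameter $t_i/\nF$. If we can keep the inner bound at $O(kt_i)$, then every layer costs only $O(k\nF \log^* \nF)$, and the overall bound reads $O(\ell \cdot k \nF \log^* \nF + k n)$, where $\ell$ is the number of layers.

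To make the inner bound $O(kt_i)$ hold throughout, I would split the sequence into two regimes. \emph{Phase~1 (doubling).} Start with $t_0 = 2k$ and set $t_i = 2 t_{i-1}$ as long as $t_{i-1} < c k \log^2 k$ for a sufficiently large constant $c$. In this regime $k t_i/t_{i-1} = 2k$, so trivially $I(2k, t_i) \leq 2 k t_i$. This phase terminates after $O(\log\log k)$ doublings. \emph{Phase~2 (tower-like growth).} Once $t_{i-1} \geq c k \log^2 k$, I would use the stronger inner bound from \lemref{fat:triangles}, namely $I(n', \nF') = O(k(n' \log^2 n' + \nF'))$. A short calculation shows that the requirement $I(kt_i/t_{i-1}, t_i) = O(kt_i)$ translates into $kt_i/t_{i-1} \leq 2^{c'\sqrt{t_{i-1}/k}}$ for a constant $c'$. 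Setting $u_i = t_i / k$, this becomes $u_i \leq u_{i-1} \cdot 2^{c' \sqrt{u_{i-1}}}$; starting from $u \geq c \log^2 k$, this grows super-exponentially and reaches $u \geq \nF/k$ within $O(\log^* \nF)$ steps.

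Combining the two phases gives $\ell = O(\log\log k + \log^* \nF)$. Multiplying by the per-layer cost $O(k \nF \log^* \nF)$, and adding the trivial $O(kn)$ bound for the base layer $\PS_0$ (points of depth below $t_0 = 2k$), yields
\begin{equation*}
    I(n, \nF) \ =\ O\bigl( k n + k \nF \log^* \nF \, (\log\log k + \log^* \nF) \bigr),
\end{equation*}
which is exactly the claimed bound.

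The main obstacle is checking that Phase~2's recurrence actually reaches $t = \nF$ in $O(\log^* \nF)$ steps, starting from the relatively small value $t \approx k \log^2 k$ rather than from a polynomial like $k^c$ as in the original \thmref{incidences} proof. This crucially relies on \lemref{fat:triangles}'s sharper $O(k(n \log^2 n + \nF))$ inner bound, which permits the Phase~1 threshold (and hence Phase~2's starting point) to be set as low as $k \log^2 k$; the generic \lemref{shatter}-based inner bound used in \thmref{incidences} only permits polynomial growth and requires starting from a polynomial in $k$, which is the reason for the weaker $\log\log \nF$ factor there.
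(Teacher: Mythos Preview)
Your proposal is correct and essentially identical to the paper's proof: both rerun the layered shallow-cutting argument of \thmref{incidences} with a two-phase threshold sequence---doubling for $O(\log\log k)$ steps, then tower-like growth driven by \lemref{fat:triangles}'s $O(k(n'\log^2 n' + m'))$ inner bound---yielding $\ell = O(\log\log k + \log^* m)$ layers at cost $O(km\log^* m)$ each. The paper's explicit Phase~2 choice is $t_i = 2^{\sqrt{t_{i-1}/k}}$ started from $t_{i_0}\approx k\log^3 k$, a minor variant of yours (note your $u_i$ recurrence drops a factor of $1/k$, but this is harmless given your Phase~1 endpoint $u\ge c\log^2 k$).
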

\begin{proof}
    We follow the proof of \thmref{incidences}, but use a sparser
    sequence: $t_0=2k$, $t_i=2t_{i-1}$ for $0<i\le i_0$, and
    $t_i=2^{\sqrt{t_{i-1}/k}}$ for $i>i_0$, with
    $i_0=\ceil{3\log\log k}$.  The number of terms is
    $\ell=O(\log\log k + \log^* m)$.  (To see this, observe that
    $t_{i_0}\ge k\log^3k$, and $t_{i_0+1}\ge k^{\omega(1)}$, and
    $t_i \ge 2^{t_{i-1}^{1/2-o(1)}}$ for $i>i_0+1$.)  For $i\le i_0$,
    we have $I(kt_i/t_{i-1}, t_i) = I(2k,t_i) = O(kt_i)$.  For
    $i>i_0$, by \lemref{fat:triangles} we have
    $I(kt_i/t_{i-1}, t_i) = O(k\cdot ((kt_i/t_{i-1})\log^2 t_i + t_i))
    = O(kt_i)$.  Then \Eqref{low:union} implies
    $I(n,m)=O(\ell km\phi(m) + kn)=O(km\phi(m)(\log\log k+\log^* m) +
    kn)$, where $\phi(m)=O(\log^*m)$ by Aronov
    \etal~\cite{abes-ibulf-14}.
\end{proof}

\begin{remark}
    As noted, some of our proofs for the incidence problem are related
    to known data structure techniques for range searching.
    Interestingly, the ideas in the above proof actually imply a new
    data structure for the problem of range reporting for points in 2D
    in the case where the query ranges are fat triangles:

    First, Chazelle and Guibas~\cite{ChazelleG86ii} have already given
    a data structure for range reporting for the case of curtains in
    2D (they called this case ``slanted range search'') with $O(n)$
    space and $O(\log n + K)$ query time, where $K$ denotes the output
    size.  The same bound thus holds for 2D triangle range reporting
    where the query ranges are triangles containing the origin
    (incidentally, this improves a result from~\cite{IshaqueSB08}).
    The quadtree-based approach in the proof of \lemref{fat:triangles}
    then gives a data structure for 2D fat-triangle range reporting
    with $O(n\log n)$ space and $O(\log n+K)$ query time.  This result
    may be of independent interest (see \cite{SharirS11} for previous
    results which have a larger query time, though with linear space).
\end{remark}

\begin{lemma}
    Let $\P$ be a set of $n$ points in the plane.  We can construct a
    data structure with $O(n\log n)$ space, so that we can report all
    $K$ points inside any query fat triangle in $O(\log n + K)$ time.
\end{lemma}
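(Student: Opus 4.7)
The plan is to turn the divide-and-conquer recursion from the proof of \lemref{fat:triangles} into a static data structure. First I would apply the three-shift trick used there (translations $(0,0)$, $(1/3,1/3)$, $(2/3,2/3)$) to reduce to the case where the query fat triangle is \emph{aligned}; I build three independent copies of the structure, one per shift, and dispatch each query to all three, taking the union of their outputs.

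For a single copy, I would construct a binary recursion tree mirroring the divide-and-conquer of \lemref{fat:triangles}: each internal node $v$ stores a point set $\PS_v$ and a quadtree square $s_v$ splitting $\PS_v$ so that neither $\PS_v\cap s_v$ nor $\PS_v\setminus s_v$ has more than $4\cardin{\PS_v}/5$ points. The two children of $v$ are built recursively on these two subsets, so the recursion depth is $O(\log n)$. At $v$, I precompute the $O(1)$ stabbing points $Q_v$ for aligned fat triangles of diameter at least one quarter of the side length of $s_v$ that intersect $s_v$; for each $q\in Q_v$, I attach a Chazelle--Guibas slanted-range-search structure, invoked via the origin-triangle-to-curtain reduction from the proof of \corref{origin:triangles} to answer in $O(\log n + K)$ time the query ``points of $\PS_v$ inside a query triangle containing $q$'' using $O(\cardin{\PS_v})$ space. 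Summed over all nodes, the total space is $O(n\log n)$, since each point occupies one node per level of the recursion tree.

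To answer a query on an aligned fat triangle $T$, I descend from the root: at $v$, if $T\subseteq s_v$ go to the inside-child; if $T$ is disjoint from $s_v$ go to the outside-child; otherwise $T$ crosses $\partial s_v$, so its diameter is at least one quarter of the side length of $s_v$, and by fatness $T$ must contain some $q\in Q_v$. At this terminal node, I query the $O(1)$ origin-triangle structures and union their outputs. Navigation contributes $O(\log n)$ cost and the terminal query costs $O(\log n + K)$, giving $O(\log n + K)$ overall.

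The main subtlety I expect is verifying that when the descent halts at $v$, the secondary structure at $v$ actually returns all of $T\cap\PS$ rather than only $T\cap\PS_v$. This should follow by induction on depth: a ``go inside'' step is taken only when $T\subseteq s_v$, so no point of $\PS$ lying outside $s_v$ can be in $T$; the ``go outside'' case is symmetric. Consequently the ancestors' decisions guarantee $T\cap\PS = T\cap \PS_v$ at the terminal node, and the $O(1)$ secondary queries suffice to report the full output without double counting.
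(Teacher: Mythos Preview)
Your proposal is correct and is essentially the same approach the paper has in mind: the paper does not give a separate proof but only a remark stating that the Chazelle--Guibas slanted-range-search structure handles the origin-containing-triangle case in $O(n)$ space and $O(\log n+K)$ time, and that the quadtree-based recursion of \lemref{fat:triangles} then yields the claimed $O(n\log n)$-space, $O(\log n+K)$-time structure for fat triangles. Your write-up is a faithful and careful fleshing-out of that sketch, including the three-shift trick, the balanced quadtree recursion with $O(1)$ precomputed stabbing points per node, and the invariant $T\cap\PS = T\cap\PS_v$ justifying correctness at the terminal node.
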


\section{Final Remarks}

It is not clear how to get a near linear bound on the number of
incidences between points and unit balls in $\Re^3$ when avoiding
$K_{k,k}$ (\corref{balls} implies a bound of $O(k^{2/3}n^{4/3})$).  In
particular, consider the following variant of the unit distance
problem, which we leave as an open problem for further research.

\begin{problem}[At most unit distance when avoiding $K_{k,k}$]
    Let $\PS$ and $\PS'$ be two sets of $n$ red and blue points,
    respectively, in $\Re^d$ for $d>2$. What is the maximum number of
    bichromatic pairs of points with distance $\leq 1$, under the
    assumption that the bipartite graph $G(\PS, \PS')$, connecting all
    pairs of points in distance $\leq 1$, does not contain $K_{k,k}$?
\end{problem}

(The non-bichromatic version of this problem has a linear bound, as
the point set cannot have balls of radius $1/2$ that contain $k$
points.)  The $\Omega(n^{4/3})$ lower bound for 5D halfspaces from
\remref{5d} also applies to 5D unit balls (since we can replace
halfspaces with balls of a sufficiently large radius, and then rescale
the balls and the points), but our upper bound for 5D unit balls from
\corref{balls} is only $O(k^{1/2}n^{3/2})$.

Another open question is whether lower bounds could be proved to show
tightness of our result for halfspaces in dimensions other than 2, 3,
and 5.

\DCGVer{\bmhead{Acknowledgments}}
\RegVer{\paragraph*{Acknowledgments.}}
\SODAVer{\subsubsection*{Acknowledgments.}}

The authors thank Abdul Basit and Adam Sheffer for useful discussions
on this problem, and comments on this manuscript.

Work by the first author on this paper was partially supported by an
NSF AF award CCF-2224271.  Work by the second author on this paper was
partially supported by an NSF AF award CCF-1907400.

\BibTexMode{%
   \RegVer{\bibliographystyle{alpha}}%
   \bibliography{klan}%
}%
\BibLatexMode{\printbibliography}

\appendix

\section{Proof of \TPDF{\lemref{canon}}{Lemma 2}}
\apndlab{canon}

\RestatementOf{\lemref{canon}}%
{ \BodyLemmaCanon{} }

\begin{proof}
    This is well known, and we include a proof for the sake of
    completeness.  Assume that $n = 2^h$. A dyadic range of $\D(n)$ is
    a \emphw{boundary} range if it is not contained in $I$, but one of
    its children is contained in $I$.  Observe that there can be at
    most two boundary ranges of the same rank, as they each cover an
    endpoint of $I$.  Replacing each boundary range by its child
    included in $I$, results in the desired decomposition $\canonX{I}$
    of $I$ into dyadic ranges (this fails only for the case where
    $I = \IRX{n}$ -- but then $\canonX{\IRX{n}}$ is a single range).

    We are left with bounding the number of boundary ranges when
    $I \subsetneq \IRX{n}$.  Observe that a range of rank $0$ can not
    be a boundary range. Furthermore, there is only a single range of
    rank $h$, specially $\IRX{n}$, and if it is not a boundary range,
    then the number of boundary ranges is at most $ 2(h-1) = 2h
    -2$. If $\IRX{n}$ is a boundary range, then there is at most one
    range of each rank that is a boundary range (since the other
    endpoint of $I$ is ``unreachable'' to be covered by an other
    boundary range), which implies a bound of $h$ on the number of
    boundary ranges in this case. Thus the number of boundary ranges
    is bounded by $\max(2h -2, h) \leq 2h-2$.

    As for the uniqueness of $\canonX{I}$, observe that any dyadic
    range properly containing a range of $\canonX{I}$, must also
    contain one of the endpoints of $I$ in its ``interior'', which
    readily implies the minimality of $\canonX{I}$, as any other
    dyadic decomposition of $I$ must refine the ranges in
    $\canonX{I}$.

    Finally, for $n$ that is not a power of two, observe that
    $2 \ceil{\log n} -2 \leq 2 \log n$.
\end{proof}

\section{On Chazelle's Lower Bound}
\apndlab{lb}

Chazelle~\cite{Chazelle90a} proved that in a pointer machine model of
computation, any data structure for the $d$-dimensional orthogonal
range reporting problem with $O(\log^{O(1)}n + K)$ query time (where
$K$ denotes the output size of the query) requires at least
$\Omega(n (\log n/\log\log n)^{d-1})$ space.  His framework for
proving such lower bounds has been adapted in various subsequent works
(e.g., see \cite{ChazelleR95,AfshaniAL10,AfshaniC21}).

For a given set $\PS$ of $n$ points, Chazelle called
a set of boxes $\BS=\{q_1,\ldots,q_m\}$ \emph{$b$-favorable}
if for each $i,j$ with $i\neq j$,
\begin{enumerate}
    \item[(i)]~$\cardin{\PS\cap q_i} \ge \log^b n$ and

    \item[(ii)]~$\cardin{\PS\cap q_i\cap q_j} \le 1$.
\end{enumerate}
On his way to proving his data structure lower bound, he
constructed a set $\PS$ of $n$ points and a $b$-favorable set $\BS$ of $m$ boxes with
\[ m = \Omega(n(\log n)^{d-b-1}/(\log\log n)^{d-1})
\]
for given any $b,d=O(1)$ and sufficiently large $n$~\cite[Lemma~4.6]{Chazelle90a}.

It is easy to see that condition~(ii) of favorability is equivalent to
the statement that the incidence graph $G(\PS,\BS)$ does not contain
$K_{2,2}$.  On the other hand, condition~(i) implies that the total
number of incidences is $\Omega(m\log^b n)$.

By setting $b$ with $\log^b n = \Theta((\log n/\log\log n)^{d-1})$,
we immediately obtain from Chazelle's construction a set $\PS$ of $n$
points and a set $\BS$ of $n$ boxes in $\Re^d$ such that
$G(\PS,\BS)$ does not contain $K_{2,2}$ and the number of incidences is
$\Omega(n (\log n/\log\log n)^{d-1})$.  (Thus, Chazelle's argument from years earlier not only implied the two-dimensional lower bound from Basit \etal's recent paper, but also answered
their question about extension to higher dimensions~\cite{bcstt-zpsh-21}.)

Similarly, it should be possible to obtain lower bounds for certain other classes of objects by using subsequent
results that follow Chazelle's framework (e.g., simplices~\cite{ChazelleR95}).

\begin{remark}
    It might also be possible to combine Chazelle's framework with
    existing \emph{upper} bounds on orthogonal range reporting in
    pointer machines \cite{Chazelle90a,AfshaniAL10} to obtain upper
    bounds on our incidence problem for boxes.  But query time bounds
    on pointer machines are $\Omega(\log n)$ for $d=2$, so one cannot
    rederive our $O(kn\log n/\log\log n)$ upper bound in two
    dimensions this way.  Also, the dependence on $k$ appears to be
    significantly worse.  It is better to directly modify the ideas
    used in known data structures for orthogonal range reporting
    (which is what we did in \secref{boxes:improved}).

    Independently, Afshani and Cheng's recent work on simplex range reporting lower bounds~\cite{AfshaniCheng23} also exploited the connection with incidence graphs avoiding
    $K_{\alpha,\beta}$ for certain choices of $\alpha$ and $\beta$.
\end{remark}

\end{document}